\documentclass[runningheads]{llncs}

\usepackage[bibliography=common]{apxproof}
\newtheoremrep{theorem}{Theorem}[section]
\newtheoremrep{lemma}{Lemma}[section]

\usepackage[noend]{algpseudocode}
\usepackage{algorithm}
\usepackage{xcolor}
\usepackage{bm}
\usepackage{enumitem}
\usepackage{listings}
\usepackage{bussproofs}
\usepackage{amsmath,amssymb}
\definecolor{vblue}{rgb}{.1,.15,.62}
\usepackage[colorlinks=true,citecolor=black,linkcolor=vblue,urlcolor=vblue]{hyperref}
\usepackage{tikz}
\usepackage{mathpartir}
\usepackage{xargs}
\usepackage{stmaryrd}
\usepackage{graphicx}
\usepackage{mathtools}
\usepackage{relsize}
\usepackage{xparse}
\usepackage{subcaption}
\usepackage{graphicx}

\usepackage{pgfplots}

\usepackage{units}
\newcommand{\m}{\unit{m}}
\newcommand{\s}{\unit{s}}
\newcommand{\kg}{\unit{kg}}
\newcommand{\Ne}{\unit{N}}
\newcommand{\ms}{\nicefrac{\m}{\s}}
\newcommand{\mss}{\nicefrac{\m}{\s^2}}

\usepackage{wrapfig}
\usepackage[capitalise]{cleveref}
\Crefname{section}{Sec.}{Secs.}
\Crefname{figure}{Fig.}{Figs.}
\Crefname{tabular}{Tab.}{Tabs.}
\Crefname{definition}{Def.}{Defs.}
\Crefname{theorem}{Theorem}{Theorems.}
\usepackage{xspace}

\definecolor{americanrose}{rgb}{1.0, 0.01, 0.24}
\definecolor{burgundy}{rgb}{0.5, 0.0, 0.13}
\definecolor{bostonuniversityred}{rgb}{0.8, 0.0, 0.0}
\newcommand{\dL}[1][]{\text{\upshape\textsf{d{\kern-0.05em}L}}\xspace}

  \newcommand{\setEnvelope}{\mathcal{E}}
  \newcommand{\formulaEnvelope}{E}

  \newcommand{\termA}{e}
  \newcommand{\termB}{g}
  \newcommand{\termFuncA}{f(x)}
  \newcommand{\programA}{\alpha}
  \newcommand{\programB}{\beta}
  \newcommand{\formulaA}{P}
  \newcommand{\formulaB}{Q}
  \newcommand{\progOde}{x'= \termFuncA \; \& \; \formulaB}
  \newcommand{\progOdeTime}{x'= f(x), t'=1 \; \& \; t \leq \constSamplingTime}
  \newcommand{\progOdeTimeControl}{x'= f(x,u), t'=1 \; \& \; t \leq \constSamplingTime}
  \newcommand{\constSamplingTime}{\Delta t}

  \newcommand{\setRCI}{\mathcal{S}}
  \newcommand{\setInitial}{\mathcal{X}_0}
  \newcommand{\setSafety}{\mathcal{X}}
  \newcommand{\setControl}{\mathcal{U}}

  \newcommand{\formulaRCI}{S(x,u)}
  \newcommand{\formulaInitial}{X_0(x)}
  \newcommand{\formulaSafety}{X(x)}

  \NewDocumentCommand{\setReach}{ O{t} O{\setInitial} O{\setEnvelope}}{%
    \mathcal{R}\bigl(#1,#2,#3)}
  \NewDocumentCommand{\setReachApp}{ O{t} O{\setInitial} O{\setEnvelope}}{\widehat{\mathcal{R}}\bigl(#1,#2,#3\bigr)}

  \newcommand{\define}{\mathrel{:=}}
  \newcommand{\matrixZonoA}{G}
  \newcommand{\matrixZonoB}{H}
  \newcommand{\vecZonoA}{c}
  \newcommand{\vecZonoB}{b}

  \newcommand{\zonotope}[2]{\bm{\langle} #1,\ #2 \bm{\rangle}}
  \newcommand{\vecZonoCoeff}{\lambda}
  \newcommand{\formulaZonoA}{\zonotope{\vecZonoA}{\matrixZonoA}}
  \newcommand{\formulaZonoB}{\zonotope{\vecZonoB}{\matrixZonoB}}
  \newcommand{\timeDerivative}[1]{\partial_t #1}

  \newcommand{\intA}{\mathbf{I}}
  \newcommand{\intB}{\mathbf{J}}
  \newcommand{\intRemainder}{\mathbf{R}}

  \newcommand{\setIntervals}{\mathbb{I}\mathbb{Q}}

  \newcommand{\opIntMid}[1]{\operatorname{mid}(#1)}
  \newcommand{\opIntRad}[1]{\operatorname{rad}(#1)}
  \newcommand{\opIntLower}[1]{\underline{#1}}
  \newcommand{\opIntUpper}[1]{\overline{#1}}
  \newcommand{\opIntEval}[3]{\left[#1\right]^{#2}_{#3}}

\NewDocumentCommand{\TM}{ m m m O{t}}{\mathrm{TM}_{#1,#2}\!\bigl(#3,#4\bigr)}
\NewDocumentCommand{\TMDerivative}{ m m m O{t}}{%
  \timeDerivative{\mathrm{TM}}_{#1,#2}\!\bigl(#3,#4\bigr)}

\makeatletter
\NewDocumentCommand{\axset}{m m}{%
  \expandafter\newcommand\csname axdisplay@#1\endcsname{#2}%
}
\newcommand{\axget}[1]{%
  \@ifundefined{axdisplay@#1}{#1}{\csname axdisplay@#1\endcsname}%
}
\NewDocumentCommand{\axtag}{m}{%
  \tag*{\color{gray}\axget{#1}}\label{ax:#1}%
}
\NewDocumentCommand{\axref}{m}{%
  \hyperref[ax:#1]{\axget{#1}}%
}
\newcommand{\ruleset}[2]{%
  \expandafter\newcommand\csname ruledisplay@#1\endcsname{#2}%
}
\newcommand{\ruleget}[1]{%
  \@ifundefined{ruledisplay@#1}{#1}{\csname ruledisplay@#1\endcsname}%
}

\newcommand{\Vast}{\bBigg@{6.5}}
\newcommand{\savedtarget}[2]{%
  \Hy@raisedlink{\hypertarget{#1}{}}%
  \protected@write\@mainaux{}{%
    \string\expandafter\string\gdef
    \string\csname\string\detokenize{#1}\string\endcsname{#2}%
  }%
}

\newcommand{\link}[1]{%
  \hyperlink{#1}{\csname #1\endcsname}%
}
\makeatother
\newcommand{\ruletag}[1]{%
  \phantomsection
  \LeftLabel{\textcolor{gray}{\ruleget{#1}}}%
  \label{rule:#1}%
}
\newcommand{\ruleref}[1]{%
  \hyperref[rule:#1]{\ruleget{#1}}%
}
\newcommand{\init}{\operatorname{init}}
\newcommand{\AND}{\; \& \;}
\newcommand{\R}{\mathbb{R}}
\newcommand{\Q}{\mathbb{Q}}
\newcommand{\vl}{\;|\;}

\newcommand{\sys}{\operatorname{sys}}
\newcommand{\ctrl}{\operatorname{ctrl}}
\newcommand{\plant}{\operatorname{plant}}

\NewDocumentCommand{\norm}{O{\,\cdot\,}}{\left\lVert#1\right\rVert}
\NewDocumentCommand{\normInf}{O{\,\cdot\,}}{\left\lVert#1\right\rVert_{\infty}}
\newcommand{\abs}[1]{\left|#1\right|}

\newcommand\blue[1]{\textcolor{vblue}{#1}}

\newcommandx{\argmax}[1][]{\underset{#1}{\mathrm{arg~max}}}
\newcommand{\uctrl}{u}

\newcommandx{\cinterval}[2][1=a,2=b, usedefault]{\left[#1,#2\right]}
\titlerunning{From Zonotopes to Proof Certificates}
\EnableBpAbbreviations
\begin{document}
\title{From Zonotopes to Proof Certificates:\\A Formal Pipeline for Safe Control Envelopes}
\author{Jonathan Hellwig\inst{1}\orcidID{0009-0009-5530-3256} \and
  Lukas Schäfer \inst{2}\orcidID{0000-0002-4335-9342} \and
Long Qian \inst{3}\orcidID{0000-0003-1567-3948} \and
André Platzer \inst{1}\orcidID{0000-0001-7238-5710} \and
  Matthias Althoff \inst{2}\orcidID{0000-0003-3733-842X}
}
\authorrunning{J. Hellwig et al.}
\institute{Karlsruhe Institute of Technology, Karlsruhe, Germany \\
  \email{\{jonathan.hellwig, platzer\}@kit.edu} \and
  Technical University of Munich, Garching, Germany \\
  \email{\{lukas.schaefer, althoff\}@tum.de} \and
  Carnegie Mellon University, Pittsburgh, United States of America \\
  \email{longq@andrew.cmu.edu}
}
\maketitle
\begin{abstract}
  Synthesizing controllers that enforce both safety and actuator constraints is a central challenge in the design of cyber-physical systems.
  State-of-the-art reachability methods based on zonotopes deliver impressive scalability, yet no zonotope reachability tool has been formally verified and the lack of end-to-end correctness undermines the confidence in their use for safety-critical systems.
  Although deductive verification with the hybrid system prover KeYmaera X could, in principle, resolve this assurance gap, the high-dimensional set representations required for realistic control envelopes overwhelm its reasoning based on quantifier elimination.
  To address this gap, we formalize how control-invariant sets serve as sound safety certificates.
  Building on that foundation, we develop a verification pipeline for control envelopes that unites scalability and formal rigor.
  First, we compute control envelopes with high-performance reachability algorithms.
  Second, we certify every intermediate result using provably correct logical principles.
  To accelerate this certification, we offload computationally intensive zonotope containment tasks to efficient numerical backends, which return compact witnesses that KeYmaera X validates rapidly.
  We show the practical utility of our approach through representative case studies.
\end{abstract}
\keywords{deductive verification, reachability analysis, zonotopes, robust control invariant sets, differential dynamic logic}

\section{Introduction}
Autonomous vehicles, aircraft, and other cyber-physical systems increasingly demand advanced control algorithms while never violating stringent safety specifications.
Control-envelope synthesis has become a central problem at the intersection of formal verification and control theory \cite{arechigaUsingVerifiedControl2014,kabraCESARControlEnvelope2024}. 
Rather than synthesizing a single monolithic controller, in control-envelope synthesis, the goal is to compute a maximal set of control inputs whose execution traces satisfy a formal safety specification.
As a result, control envelopes enable decoupling safety from performance: the precomputed control envelope guarantees that \emph{every} admissible input meets the specification, while at runtime an optimization routine can freely sample concrete control signals with respect to a secondary performance objective.
A promising recent line of work \cite{Schaefer2024a} frames control envelope synthesis as the computation of the maximal \emph{robust control invariant}  sets (RCI) \cite{Bertsekas1972,Rungger2017,Rakovic2010_TAC}--- the set of states from which at each sampling instant, there exists a control action that keeps the system in it --- thereby guaranteeing safety by construction.
Building on this insight, the authors rely on efficient over-approximations of reachable sets as the main computational workhorse.
The efficiency of their computations is achieved by symbolic set representations, e.g., zonotopes \cite{Althoff2011f}, support functions \cite{Althoff2016c,Wetzlinger2023a}, or ellipsoids, that are either closed or can be tightly over-approximated under Minkowski addition and affine transformations, realizing an efficient computation of the reachable set \cite{Althoff2021b}.
However, despite their scalability, this reachability-based approach raises two concerns in safety-critical fields:
\begin{enumerate}
  \item Floating-point uncertainty: Current reachability tools perform every operation with finite precision, so the end result does not have any end-to-end guarantee that rounding errors have not led to a violation of the safety specification.
  \item Formally unverified implementations: The highly-tuned numerical kernels are optimized for speed, not transparency. Proving the correctness of their implementations is prohibitively expensive.
\end{enumerate}
Mission-critical control systems --- like aircraft, autonomous robots, or self-driving cars --- would benefit greatly from numerical pipelines that provide end-to-end correctness guarantees.
One way to obtain such rigorous guarantees is to use a theorem prover like KeYmaera~X \cite{Fulton2015} that implements differential dynamics logic (\dL) \cite{platzerCompleteUniformSubstitution2017b,platzerLogicalFoundationsCyberPhysical2018a}, a specialized logic designed for specification and deductive verification of cyber-physical systems.
Unlike numerical reachability tools \cite{Althoff2015a,chenFlowAnalyzerNonlinear2013,frehseSpaceExScalableVerification2011}, which aim to provide fast but conservative over-approximations of reachable sets, a theorem prover like KeYmaera X works with a symbolic proof calculus whose goal is to construct fully machine-checkable proofs that a control system satisfies its specification.
The key difficulty in combining techniques lies in the fundamental difference of formalisms of \dL and classical reachability analysis.
In reachability analysis, one works with highly specialized set representations that admit efficient "push-button" computations, but at the price of modelling restrictions.
In \dL, specifications are expressed as fully general semi-analytic formulas, resulting in a much richer specification language.
However, verification in this expressive framework typically demands substantial interactive proof effort for complicated systems.
As a result, tasks that are easy on one side, such as scalable numerical over-approximations, can be arduous on the other, such as manual proof construction, and vice versa.
Indeed, the theorem-proving and the reachability analysis research fields have been developed largely in parallel, with little cross-pollination.

Thus, our goal with this paper is to build a bridge between the two research fields.
Specifically, we formalize the control-envelope synthesis approach using RCIs in \dL.
By doing so, we enable end-to-end correctness guarantees for an envelope computed by an independent numerical reachability tool.
This integration of numerical reachability and deductive verification achieves something neither approach could accomplish alone: efficient numerical computation paired with fully rigorous, machine-checked correctness.
During formalization, we encountered the following challenges:
\begin{enumerate}
  \item \emph{Linking safety specifications to RCIs.} Although RCIs are commonly used as terminal constraints in model predictive control \cite{Blanchini1999,Schuermann2018}, no formal proof has yet shown that an RCI necessarily satisfies a given safety specification.
  \item \emph{Treatment of continuous dynamics.} The algorithms utilized by reachability tools are inherently delicate due to their numerical nature, therefore difficult to implement directly in \dL. Whilst the completeness of \dL \cite{platzerAxiomatizationCompactInitial} essentially guarantees that all true numerical properties can be deductively proven in principle, more efficient methods are desired for practical problems.
  \item \emph{Scalable verification of set containment.} Rigorous proofs of set containment --- a central step in reachability analysis methods --- involve large arithmetic formulas that are often intractable to verify without specialized methods. KeYmaera~X uses a general decision procedure for real-arithmetic formulas, which does not scale to large verification problems.
\end{enumerate}
\emph{Our Contribution} is to build a formal link between reachability analysis and deductive verification by introducing a \dL-based verification pipeline for control-envelopes addressing all three challenges.
Specifically, we provide a syntactic derivation showing that if a control envelope is a robust control invariant set, then it automatically satisfies the \dL specification. 
In addition, we leverage Taylor Models in \dL, which is general enough to validate the invariant sets synthesized by numerical methods while also having rigorous error bounds that are deductively proven in \dL.
Finally, to overcome the scalability bottleneck in set containment proofs, we focus on zonotopic control envelopes and extend a known witness theorem for zonotope containment. 
This enables fast floating-point search for a candidate witness followed by a lightweight certification in KeYmaera X without relying on the slow, general-purpose decision procedure.
\section{Preliminaries} \label{sec:preliminaries}
This section first recalls the control envelope synthesis problem and then distills the basic principles of differential dynamic logic that support our verification approach.
Let \(\setSafety \subseteq \R^n\) denote the set of admissible states and \(\setControl \subset \R^m\) be the set of admissible control inputs.
For a \emph{sampling period} \(\constSamplingTime > 0\) the \emph{sampled-data system} is given by
\begin{equation} \label{eq:sampled_data}
  x'(t)=f(x(t),u_{\lfloor t/\constSamplingTime\rfloor}), \quad x_0 \in \setInitial,
\end{equation}
where the control input is zero-order-hold: for each integer \(k \geq 0\) we sample at \(t_k = k \constSamplingTime\) a feedback law \(\mu : \R^n \rightarrow \R^m\) generates \(u_k \define \mu(x(t_k))\), which is held constant on \([t_k, t_{k+1}]\).
In the classical control, problem we seek to find one concrete feedback law \(\mu : \R^n \rightarrow \R^m\) such that every control input is admissible,
\begin{equation} \label{eq:control}
  \mu(x(k\constSamplingTime)) \in \setControl,
\end{equation}
and every trajectory of the closed-loop system remains in the admissible set of states,
\begin{equation} \label{eq:safety}
  x(t) \in \setSafety
\end{equation}
for all \(t \geq 0\).
The control envelope problem lifts this problem from finding a single control law to a family of laws.
Concretely, we seek to construct a relation \(\setEnvelope \subseteq \R^n \times \R^m\) between states and control inputs such that \emph{every} feedback law whose sampled control inputs stay within envelope, automatically satisfy the admissibility and safety specifications.

In reachability analysis, one over-approximates ODE trajectories by a set-valued abstraction, a perspective that yields efficient computational properties.
This viewpoint naturally leads to the following notion of reachable sets.
\begin{definition}[Reachable set]
	Let \(\constSamplingTime > 0\) be the sampling period and let $\setEnvelope \;\subseteq\; \R^n \times \R^m$ be a control envelope: \(\setEnvelope_x \mathrel{:=} \{u \in \R^m \mid (x,u) \in \setEnvelope\}\) denotes the set of control outputs at \(x \in \R^n\). Given an initial set $\setInitial$, the \emph{reachable set} at time $t \in \cinterval[0][\constSamplingTime]$ is the set of states
	\begin{align*}
			\setReach \mathrel{:=} 
      \left\{x \in \R^n \mid \exists x_0\in \setInitial\exists u \in \setEnvelope_{x_0}  : x = x_0 + \int_0^{t} f(x(s),\uctrl)ds\right\}.
		\label{eq:exactReachSetTP}
	\end{align*} 
	The reachable set over the time interval $\cinterval[0][{t}]$ is the union of reachable sets, i.e.,
	\begin{equation*}
		\setReach[\cinterval[0][t]][\setInitial] = \bigcup_{s \in \cinterval[0][{t}]} \setReach[s][\setInitial].
		\label{eq:exactReachSetTi}
	\end{equation*}
\end{definition}
The difficulty with reachable-set computations is that we can only evaluate them over finite time horizons.
How can we verify that a proposed control envelope \(\setEnvelope\) satisfies the safety property \eqref{eq:safety} for all \(t \geq 0\)?
The key is to use an inductive argument: leverage finite-horizon reachable-set computations to establish an invariant that guarantees safety over an infinite time horizon.
This leads us to the definition of robust control invariants.
\begin{definition}[Robust control invariant set~\cite{Schaefer2024a}] \label{def:RCI}
  A set \(\setRCI \subset \R^n\) is called a \emph{robust control invariant set} if there exists a control envelope \(\setEnvelope \;\subset\; \R^n \times \R^m\) such that
  \begin{enumerate}
    \item \emph{One-step invariance}: \(\setReach[\constSamplingTime][\setRCI] \;\subseteq\; \setRCI,\) \label{item:invariance}
    \item \emph{One-step safety}: \(\setReach[\cinterval[0][\constSamplingTime]][\setRCI] \;\subseteq\; \setSafety ,\) \label{item:safety}
    \item \emph{Control-admissibility}: \(\forall x_0 \in \setRCI : \setEnvelope_{x_0} \;\subseteq\; \setControl. \) \label{item:control}
  \end{enumerate}
\end{definition}
The central claim is that the existence of a robust control invariant \(\setRCI\) and its associated control envelope \(\setEnvelope\), implies the safety property \eqref{eq:safety} for all \(t \geq 0\).
We prove this claim formally in \cref{sec:verification}.
\subsection{Differential Dynamic Logic}
To formally verify a control envelope, we must first introduce differential dynamic logic: its hybrid-program modeling language, its formula specification language, and the proof calculus that enables deductive verification.
For a more detailed introduction the reader is referred to the literature \cite{platzerCompleteUniformSubstitution2017b,platzerLogicalFoundationsCyberPhysical2018a}.
\paragraph{Hybrid Programs.} The language of hybrid programs is generated by the following grammar, where \(x\) is a variable, \(\termA\) is a \dL term, \(\formulaB\) is a formula of first-order real arithmetic:
\[
  \programA, \programB \mathrel{::=} x:=\termA \vl x:= * \vl \progOde \vl ?\formulaA \vl \programA; \programB \vl \programA \cup \programB \vl \programA^*.
\]
Continuous dynamics are modeled by \(x' = f(x) \;\&\; Q\) evolving in the domain \(Q\).
Discrete dynamics are modeled by \emph{assignments} \(x := e\),  which instantaneously assign the term \(e\) to \(x\) and \emph{tests} \(?Q\), which check whether the formula \(Q\) is true in the current state.
The \emph{nondeterministic assignment} \(x := *\) assigns an arbitrary value to \(x\).
Throughout this paper we always use such assignments in conjunction with a test \(?Z(x)\), to model non-deterministic assignment restricted to a formula \(Z(x)\).
To combine the discrete and continuous fragments, there are three program combinators: \emph{sequential composition} \(\alpha ; \beta\), which first runs \(\alpha\) then \(\beta\); \emph{nondeterministic choice} \(\alpha \cup \beta\), which runs either \(\alpha\) or \(\beta\); and finally \emph{nondeterministic repetitions} \(\alpha^*\), which repeats \(\alpha\) an arbitrary number of times.
\paragraph{Formulas.}
The formulas of \dL are defined by the following grammar where
  \(\termA, \termB\) are terms, \(\formulaA, \formulaB\) are formulas, \(x\) is a variable, and \(\programA\) is a hybrid program:
\[
  \formulaA, \formulaB ::= \termA \leq \termB \vl \neg \formulaA \vl \formulaA \land \formulaB
  \vl \formulaA \rightarrow \formulaB \vl \forall x  \formulaA
  \vl [\alpha]\formulaA.
\]
A \dL formula combines first-order arithmetic with model operators that refers to program behavior.
Atomic formulas are inequalities \(\termA \leq \termB\) between real-valued terms.
These atomic formulas are composed with Boolean connectives such as \(\neg\), \(\land\) and \(\rightarrow\), together with the  first-order quantifier \(\forall\).
Connectives such as \(\lor\) and the quantifier \(\exists\) are definable from these primitives.
The only distinctive construct is the \emph{box modality} \([\alpha]\formulaA\), which asserts that after every execution of the hybrid program \(\alpha\) the post-condition \(\formulaA\) holds.
\subsection{Deductive Verification}
Formula verification in \dL is carried out within a sequent-calculus framework built on sound axioms and inference rules.
We write \(\Gamma \vdash \Delta\) if the formula \(\Delta\) is provable from the assumption \(\Gamma\) in the \dL proof calculus.
The calculus includes propositional rules such as
\ruleset{impliesRight}{\(\rightarrow\)\textup{R}}
\ruleset{andRight}{\(\land\)\textup{R}}
\begin{mathpar}
  \ruletag{impliesRight}
  \AX$\fCenter \Gamma, P \vdash Q, \Delta$
  \UI$\fCenter \Gamma \vdash P \rightarrow Q, \Delta$
  \DisplayProof
  \and
  \ruletag{andRight}
  \AX$\fCenter \Gamma \vdash P, \Delta$
  \AX$\fCenter \Gamma \vdash Q, \Delta$
  \BI$\fCenter \Gamma \vdash P \land Q, \Delta$
  \DisplayProof
\end{mathpar}
The \ruleref{impliesRight} rule decomposes the implication \(P \rightarrow Q\) by adding \(P\) in the list of assumptions, while the \ruleref{andRight} rule splits the conjunctive formula \(P \land Q\) into two separate proof goals.
Similarly, there are rules that decompose every other logical construct.
These inference rules are applied bottom-up, but provability is read top-down: if the premises at the top of each rule are provable, then the conclusion is provable as well. 

Box modalities of hybrid program are treated with an axiomatic proof calculus that strips away the program structure step by step, leaving simpler proof goals.
A few representative axioms illustrate this:
\axset{boxSeq}{[\,;\,]}
\axset{boxChoice}{$[\cup]$}
\begin{flalign*}
  \axtag{boxSeq} & \blue{[\alpha ; \beta]\formulaA} \leftrightarrow [\alpha][\beta]\formulaA  \\
  \axtag{boxChoice} & \blue{[\alpha \cup \beta]\formulaA} \leftrightarrow [\alpha]\formulaA \land [\beta]\formulaA 
\end{flalign*}
The \axref{boxSeq} axiom unfolds the sequential composition into successive modalities, while the \axref{boxChoice} axiom reduces the nondeterministic choice \(\alpha \cup \beta\) to a conjunction over two branches.

\begin{wrapfigure}{r}{0.52\textwidth}
  \centering
  \begin{minipage}[c]{0.04\textwidth}
    $\overset{\textbf{Deduction}}{\Vast\uparrow}$
  \end{minipage}%
  \hfill
  \begin{minipage}[c]{0.48\textwidth}
    \begin{prooftree}
      \AX$\fCenter \vdash Q_1(x)$
      \AX$\dots \fCenter$
      \AX$\vdash \;\fCenter Q_n(x)$
      \TI$\vdots \;\, \fCenter$
      \UI$\Gamma \vdash \;\fCenter [\alpha][\beta]\varphi$
      \LeftLabel{\axref{boxSeq}}
      \UI$\Gamma \vdash \;\fCenter [\alpha;\beta]\varphi$
      \LeftLabel{\ruleref{impliesRight}}
      \UI$\vdash \Gamma \rightarrow \;\fCenter [\alpha;\beta]\varphi$
    \end{prooftree}
  \end{minipage}
  \vspace*{-1.5\baselineskip}%
\end{wrapfigure}
By combining the inference rules with the axioms, we can derive new theorems, and every step of the resulting derivation can be mechanically verified by a proof checker.
Transformation steps on logical connectives and box modalities are applied until the renaming goal is purely arithmetic.
Then, we can invoke a trusted decision procedure, such as quantifier elimination.
In the example, we first apply the propositional rule \ruleref{impliesRight}, then the \axref{boxSeq} axiom.
After a few further transformation steps we reach purely arithmetic formulas \(Q_1(x), \dots, Q_n(x)\), at which point we hand the proof obligations to the trusted decision procedure, finishing the proof.

\section{Control Envelope Verification Framework} \label{sec:verification}
In this section, we first formalize the control \emph{envelope} synthesis problem from \cref{sec:preliminaries} in \dL.
To do so, we first model the sampled-data systems \eqref{eq:sampled_data} as a hybrid program.
We can then state the safety property \eqref{eq:safety} formally in \dL.
Once these components are formalized, we present several key theorems which, taken together, yield a deductive proof of the safety property.
This allows us to take a control envelope produced numerically and rigorously verify its adherence to the specification.
\subsection{Control Envelope Synthesis Problem in \dL}
In the following we adopt the conventions: for a function symbol \(h\), let \(\timeDerivative{h}\) denote its time derivative, and we use \(\nabla_x\) to denote the syntactic partial gradient with respect to \(x\).
We use \(\normInf\) to denote the infinity norm of \(\R^n\).
Hereafter we write the corresponding uppercase predicates, e.g. \(\formulaEnvelope(x,u)\), \(\formulaRCI\), \(\formulaInitial\), \(\formulaSafety\) and \(U(u)\), to denote their syntactic formula counterparts in \dL.
In \cref{sec:preliminaries}, we used a calligraphic font to denote semantic sets, e.g. \(\setEnvelope\) for the control envelope, \(\setRCI\) for the robust control invariant set, \(\setInitial\) for the set of initial states, \(\setSafety\) for the safety set and \(\setControl\) for the control constraint set.

We now formulate the sampled-data systems.
Let \(x\) be the state vector, \(u\) the control input and \(\formulaEnvelope(x,u)\) a control envelope formula, and \(\constSamplingTime > 0\) the sampling period.
We define the \emph{initialization}, \emph{controller} and
\emph{plant} components as follows:
\begin{align*}
  \init &\equiv u:=*; ?\exists x \formulaEnvelope(x, u), \\
  \ctrl &\equiv (?t = \constSamplingTime; u:=*; ?\formulaEnvelope(x, u); t:= 0 \;) \cup (? t
  \neq \constSamplingTime), \\
  \plant &\equiv \{\progOdeTimeControl\}.
\end{align*}
The controller nondeterministically chooses between two branches. If the current time tt has not yet reached the sampling period $\constSamplingTime$, it does nothing.
Once the sampling instant is reached, a new control action is chosen nondeterministically from within the control envelope \(\formulaEnvelope\) based on the current state \(x\).
Note that this controller is an abstraction of any concrete implementation.
By isolating only the aspects relevant to safety verification, we simplify reasoning in \dL.
Any actual controller would, of course, adhere to additional performance criteria.
The plant is modeled by the differential equation \(x'=f(x,u),t'=1\) subject to the constraint \(t \leq \constSamplingTime\) to ensure a duration of evolution of at most \(\constSamplingTime\).
In \dL, differential equations are non-deterministic: the continuous evolution may stop at any state that still satisfies the constraint \(t \leq \constSamplingTime\).
The \dL \emph{sampled-data system} for one sampling period is
\begin{align*}
  \sys &\equiv \ctrl;\plant.
\end{align*}
The corresponding \dL closed-loop system is the nondeterministic repetition of this program \(\sys^*\).
Thus, the \dL \emph{control envelope synthesis} problem is to determine an envelope predicate \(\formulaEnvelope(x,u)\) such that the \emph{control-admissibility formula}
\begin{align} \label{fml:control_admissibility}
  \exists x \formulaEnvelope(x,u) \rightarrow U(u),
\end{align}
and the \emph{closed-loop safety formula}
\begin{align} \label{fml:system_safety}
  X_0(x) \land t = 0 \rightarrow [\init;\sys^*]X(x)
\end{align}
are valid, i.e., true in all states.
These two formulas are the formalized counterparts to \eqref{eq:control} and \eqref{eq:safety}.
\subsection{Deductive Verification of Control Envelopes}
Having posed the control envelope synthesis problem in \dL, we can take an envelope computed by an unverified numerical tool, formalize its specification in \dL
and certify its correctness through an independent verification process.
When carrying out deductive verification, several key considerations arise:
The control-admissibility formula \eqref{fml:control_admissibility} is purely a first-order statement over the reals, so it can be proved without invoking any of \dL's hybrid system axioms.
In contrast, the system-safety formula \eqref{fml:system_safety} is more challenging: it combines discrete and continuous dynamics, a nondeterministic repetition, and may contain a high-dimensional control envelope.
To tackle this challenge for generic sampled-data system, we present a systematic verification strategy.
Specifically, we present a sequence of theorems that, when composed, yield a rigorous proof of the desired safety property.
The following five theorems outline the high-level structure of our argument:
\begin{enumerate}
  \item \cref{thm:rci}: Relates closed-loop safety \eqref{fml:system_safety} to the invariance and safety properties of robust control invariant sets.
  \item \cref{thm:taylor_model}: Connects finite-horizon box modalities for continuous systems to Taylor models.
  \item \cref{thm:reach_discrete}: A special case of \cref{thm:taylor_model} that establishes the RCI invariance condition for zonotopic control envelopes.
  \item \cref{thm:reach_interval}: A special case of \cref{thm:taylor_model} used to establish the RCI safety condition for zonotopic control envelopes.
  \item \cref{thm:zonotope}: Connects the zonotope-containment problem to an efficiently implementable witness-checking problem.
\end{enumerate}

The first theorem shows that, in \dL, the invariance and safety property of robust control invariants imply the closed-loop safety property \eqref{fml:system_safety}:
\begin{theoremrep} \label{thm:rci}
  The following is a derived proof rule of \dL:
  \def\ScoreOverhangRight{2cm}
  \label{thm:loop-safety}
  \begin{prooftree}
      \AX$ \formulaEnvelope(x, u), t = 0 \fCenter\;\vdash  [\plant]X(x)$
      \noLine
      \UI$ \formulaEnvelope(x, u), t = 0 \fCenter\;\vdash  [\plant](t =\Delta t\rightarrow \exists u \formulaEnvelope(x, u))$
      \UI$ X_0(x), t = 0 \fCenter\;\vdash [\init;\sys^*]X(x)$
    \end{prooftree}
\end{theoremrep}
\begin{appendixproof}
  We prove each rule:
  \begin{enumerate}
    \item[\link{rl:rci}] We have
      \begin{prooftree}
      \AXC{\(X_0(x), t = 0, S(x,u) \vdash [\sys^*]X(x)\)}
      \LeftLabel{\ruleref{impliesRight},\ruleref{forallRight}}
      \UIC{\(X_0(x), t = 0 \vdash \forall u (S(x,u) \rightarrow
      [\sys^*]X(x))\)}
      \LeftLabel{\axref{boxTest},\axref{boxNonDetAssign}}
      \UIC{\(\Gamma, X_0(x), t = 0 \vdash [\init;\sys^*]X(x), \Delta\)}
    \end{prooftree}
    The rest follows from Lemma \ref{lemma:loop}.
\end{enumerate}
\end{appendixproof}
Informally, the bottom premise states that whenever an initial control-state pair satisfies \(\formulaEnvelope(x,u)\), then, after one continuous evolution lasting \(\constSamplingTime\), we can choose a new control action to remain in \(\formulaEnvelope(x,u)\).
The top premise encodes the safety property: if the start in \(\formulaEnvelope(x,u)\), then we always remain in the safety constraint \(X(x)\) after running the plant.
The safety property is enforced continuously; not just at the sampling points.
Taken together, these two properties entail the safety of the closed-loop system for an arbitrary number of execution steps.
The premises are much easier to discharge, because they only involve ordinary differential equations.
The two main challenges in proving the premises in \cref{thm:rci} are 
\begin{enumerate}
  \item constructing rigorous reachable set over-approximations,
  \item efficiently discharging the resulting proof obligations, which involve large arithmetic formulas.
\end{enumerate}
We address the first challenge by introducing generalized Taylor models and showing how they can be derived in \dL.
Before introducing Taylor models, we briefly review the necessary background Picard iteration and interval arithmetic.
\paragraph{Picard iteration}
Picard iteration is a classical technique that successively constructs  polynomial approximations to the solutions of ordinary differential equations.
They can be carried out to arbitrarily high order, resulting approximations of whatever precision is required.
Let \(h\) be a function symbol and let \(\lambda\) be a variable vector such that \(\normInf[\lambda] \leq 1\).
Further, let \(x' = f(x)\) be an ODE system with parameter-dependet initial value \(x_0 = h(\lambda)\).
The sequence of \emph{Picard iterates} \((p_k)_{k\geq 0}\) is recursively defined by
\begin{align*}
  p_0(t,\lambda) \mathrel{:=} h(\lambda), \quad p_{k+1}(t,\lambda) \mathrel{:=} h(\lambda) + \int_{0}^{t} f(p_k(s,\lambda))ds, \quad k \geq 0.
\end{align*}
\paragraph{Interval arithmetic.}
A foundational concept we are making use of to discharge arithmetic proof obligations is \emph{interval arithmetic} \cite{mooreIntroductionIntervalAnalysis2009}.
We write \(\setIntervals^n\) to denote the set of rational intervals in the \(n\)-dimensional vector space \(\Q^n\).
We write \(\intA = [\opIntLower{\intA},\opIntUpper{\intA}] \in \setIntervals^n\), where \(\opIntLower{\intA} \in \Q^n \) and \(\opIntUpper{\intA} \in \Q^n\) are the lower and upper bound respectively.
In order to simplify, notation we write
\begin{align*}
  \opIntMid{\intA} &\mathrel{:=} \frac{\opIntUpper{\intA} + \opIntLower{\intA}}{2},\\
  \opIntRad{\intA} &\mathrel{:=} \opIntUpper{\intA} - \opIntLower{\intA},
\end{align*}
for an intervals \emph{mid-point} and \emph{radius}.
For a variable \(x\) and concrete \dL term \(e\), we write \(\opIntEval{e}{\intA}{x}\)
to denote its interval evaluation with respect to the interval \(\intA\).
\paragraph{Taylor models.} Equipped with interval arithmetic and Picard polynomial approximations, we are now ready to define Taylor models in \dL\cite{berzVerifiedIntegrationODEs1998}.
\begin{definition}[Taylor models]
  Let \(p(t,\lambda)\) be a concrete \dL term and \(\intA(t)\) a \dL interval. Then a \emph{\dL Taylor model} is a tuple \((p, \intA)\) whose associated formula is
  \begin{align*}
    \TM{p}{\intA}{x,\lambda} \;\equiv\; \opIntLower{\intA}(t) \leq x - p(t,\lambda) \leq \opIntUpper{\intA}(t).
  \end{align*}
  For its derivative formula, we write 
  \begin{align*}
    &\TMDerivative{p}{\intA}{x, \lambda} \; \equiv \; \timeDerivative{\opIntLower{\intA}}(t) < f(x) - \timeDerivative{p}(t, \lambda) < \timeDerivative{\opIntUpper{\intA}}(t).
  \end{align*}
\end{definition}
\begin{theoremrep} \label{thm:taylor_model}
    Let \((p,\intA)\) be a Taylor model.
    Let \(h\) be a function symbol and \(X_0(x) \equiv \exists \lambda \left( x = h(\lambda) \land \normInf[\lambda] \leq 1\right)\).
    The following is a sound derived proof rule of \dL:
    \begin{prooftree}
      \AX$\fCenter \exists \lambda \exists t (\TM{p}{\intA}{x,\lambda} \land 0 \leq t \leq \Delta t \land \normInf[\lambda] \leq 1) \;\vdash \formulaA(x)$
      \noLine
      \UI$\fCenter X_0(x) \vdash \exists \lambda \left(\TM{p}{\intA}{x,\lambda}[0] \land \normInf[\lambda] \leq 1\right)$
      \noLine
      \UI$\fCenter \TM{p}{\intA}{x,\lambda},\; 0 \leq t \leq \Delta t,\; \normInf[\lambda] \leq 1\vdash \TMDerivative{p}{\intA}{x,\lambda}$
      \UI$\fCenter \fCenter X_0(x),\; t = 0 \vdash \; [\progOdeTime]\formulaA(x)$
    \end{prooftree}
\end{theoremrep}
\begin{appendixproof}[Sketch.]
  In this proof we use the following notation to denote the closed version of the derivative formula:
  \[
    \TMDerivative{p}{\intA}{x, \lambda}^{\leq} \; \equiv \; \timeDerivative{\opIntLower{\intA}}(t) \leq f(x) - \timeDerivative{p}(t, \lambda) \leq \timeDerivative{\opIntUpper{\intA}}(t).
  \]
  Note that the theorem follows by establish the derivability of the following proof rule:
  \begin{prooftree}
      \AX$\fCenter X_0(x) \vdash \exists \lambda \left(\TM{p}{\intA}{x,\lambda}[0] \land \normInf[\lambda] \leq 1\right)$
      \noLine
      \UI$\fCenter \TM{p}{\intA}{x, \lambda},\; 0 \leq t \leq \Delta t \;\vdash \TMDerivative{p}{\intA}{x, \lambda}$
      \UI$\fCenter \fCenter X_0(x),\; t = 0 \vdash  [\progOdeTime]\TM{p}{\intA}{x, \lambda}$
    \end{prooftree}
  as the desired conclusion then follows by a sound application of differential weakening. To derive this proof rule, note $\TM{p}{\intA}{x, \lambda}$ only contains non-strict inequalities, so we may soundly apply closed differential induction.
  \begin{prooftree}
      \AX$\fCenter X_0(x) \vdash \exists \lambda \left(\TM{p}{\intA}{x,\lambda}[0] \land \normInf[\lambda] \leq 1\right)$
      \noLine
      \UI$\fCenter  \;\vdash [\progOdeTime \land \TM{p}{\intA}{x, \lambda}]\TMDerivative{p}{I}{x, \lambda}$
      \UI$\fCenter \fCenter X_0(x),\; t = 0 \vdash  [\progOdeTime]\TM{p}{\intA}{x, \lambda}$
    \end{prooftree}
    Note that the top premise is part of our assumption, so it remains to handle the second premise. Applying differential weakening yields 
    \begin{prooftree}
        \AX$\fCenter \TM{p}{\intA}{x, \lambda},\; 0 \leq t \leq \Delta t ,\; \normInf[\lambda] \leq 1 \vdash \TMDerivative{p}{\intA}{x, \lambda}$
      \UI$\fCenter  \;\vdash [\progOdeTime \land \TM{p}{\intA}{x, \lambda}]\TMDerivative{p}{I}{x, \lambda}$
    \end{prooftree}
    thereby reducing the conclusion to one of our assumptions, hence the derivation is complete.
    \qed
\end{appendixproof}
\paragraph{Zonotope reachable set} We now focus our attention on the zonotope case and set out to prove premise 1 of \cref{thm:rci}.
Our strategy is to successively rewrite the proof goal until it is an instance of zonotope containment. Once in that form, a lightweight numerical solver can supply a witness that one zonotope is contained in the other.
\begin{definition}[Zonotope]\label{def:zonotope}
  Let \(\matrixZonoA\) be an \(n \times p\) generator \dL matrix,  
  \(\vecZonoA\) an \(n\) center \dL vector, and let  
  \(\vecZonoCoeff\) be a \(p\) \dL vector.
  The \emph{zonotope formula} associated with \(\formulaZonoA\) is given by
  \[
    \formulaZonoA(x) \;\equiv\; \exists\vecZonoCoeff
      \bigl( x = \vecZonoA + \matrixZonoA\,\vecZonoCoeff \land \normInf[\vecZonoCoeff] \le 1 \bigr).
  \]
\end{definition}
In order to simplify arithmetic reasoning, our goal is to linearize the Picard iterates.
This helps us to over-approximate the reachable set at the time instance \(\constSamplingTime\) by a zonotope.
We have the following theorem:
\begin{lemmarep}[Linear interval abstraction]\label{lemma:linearAbstraction}
  Let \(p(x)\) be a concrete \dL polynomial in \(x\).
  Then, for the interval \(\intA \in \setIntervals^n\) and the interval remainder
  \[
  \intRemainder = \opIntEval{p(x) - p(0) - \nabla_{x}p(0)^\top x}{\intA}{x}
  \]
  the following formula is a sound axiom of \dL:
  \begin{align*}
    \opIntLower{\intA} \leq x \leq \opIntUpper{\intA} \rightarrow \exists \xi (p(x) = p(0) + \nabla_{x}p(0)^\top x + \opIntMid{\intRemainder} + \frac{1}{2}\opIntRad{\intRemainder} \xi \land \normInf[\xi] \leq 1 ).
  \end{align*}
\end{lemmarep}
\begin{proof}
  Let \(x \in \R^n\). Note that for an interval \(\intA \in \setIntervals^n\) we have \(x \in \intA\) iff \[\exists \eta (x = \opIntMid{\intA} + \frac{1}{2}\opIntRad{\intA}^\top\eta \land \normInf[\eta] \leq 1).\]
  By assumption, we have \(x \in \intA\). This implies
  \begin{align*}
    p(x) &= p(0) + \nabla_{x}p(0)^\top x + (p(x) - p(0) - \nabla_{x}p(0)^\top x) \\
    &\in p(0) + \nabla_{x}p(0)^\top x + \opIntEval{p(x) - p(0) - \nabla_{x}p(0)^\top x}{\intA}{x}.
  \end{align*}
  Finally, using the formula representation from above, for some \(\xi \in \R^n\) with \(\normInf[\xi] \leq 1\) we have
  \begin{align*}
    p(x) = p(0) + \nabla_{x}p(0)^\top x + \opIntMid{\intRemainder} + \frac{1}{2}\opIntRad{\intRemainder}^\top\xi.
  \end{align*}
  \qed
\end{proof}
\begin{theoremrep}[Zonotope reachable set for discrete time instance] \label{thm:reach_discrete}
  Let \((p, \intA)\) be a Taylor model of the ODE system \(x'=f(x) \;\&\; t \leq \constSamplingTime\).
  Let \[
  \intRemainder = \opIntEval{p(t,\lambda) - p(t,0) - \nabla_{\lambda}p(t,0)^\top \lambda}{[0,\constSamplingTime] \times [-1,1]^n}{(t,\lambda)}
  \] be the remainder of its linear interval abstraction.
  Finally, we define the \dL center vector and generator matrix
  \begin{align*}
    \vecZonoB &\define p(\constSamplingTime, 0)  + \opIntMid{\intA(\constSamplingTime)} + \opIntMid{\intRemainder}, \\
    \matrixZonoB &\define [\nabla_{\lambda}p(\constSamplingTime, 0),\;\frac{1}{2} \opIntRad{\intA(\constSamplingTime)},\;\frac{1}{2}\opIntRad{\intRemainder}].
  \end{align*}
  
  Then, the following is a sound derived proof rule of dL:
  \begin{prooftree}
    \AX$\fCenter \formulaZonoB(x,u) \vdash \zonotope{\vecZonoA_x}{\matrixZonoA_x}(x,u)$
    \noLine
    \UI$\fCenter \formulaZonoA(x,u) \vdash \exists \lambda \left(\TM{p}{\intA}{x,u,\lambda}[0] \land \normInf[\lambda] \leq 1\right)$
    \noLine
      \UI$\fCenter \TM{p}{\intA}{x, u,\lambda},\; 0 \leq t \leq \Delta t,\; \normInf[\lambda] \leq 1\vdash \TMDerivative{p}{\intA}{x, u,\lambda}$
    \UI$\fCenter \formulaZonoA(x,u), t=0 \vdash  [\plant]\left(t = \Delta t
        \rightarrow \exists u \, \formulaZonoA(x,u)\right)$
  \end{prooftree}
\end{theoremrep}
\begin{proof}
  First, by equivalence rewriting we have 
  \(\exists u \formulaZonoA(x,u) \equiv \exists u \zonotope{(\vecZonoA_x, \vecZonoA_u)}{(\matrixZonoA_x, \matrixZonoA_u)} \equiv \zonotope{\vecZonoA_x}{\matrixZonoA_x}(x).\)
  Then, we rewrite the Taylor model formula:
  \begin{align*}
    \TM{p}{I}{x,u}[\constSamplingTime] &\equiv \exists \lambda \left(\opIntLower{\intA}(\constSamplingTime) \leq (x,u) - p(\constSamplingTime,\lambda) \leq \opIntUpper{\intA}(\constSamplingTime) \land \normInf[\lambda] \leq 1\right) \\
    &\equiv \exists \xi \exists \lambda \left((x,u) = p(\constSamplingTime,\lambda) + \opIntMid{\intA(\constSamplingTime)} + \frac{1}{2} \opIntRad{\intA(\constSamplingTime)} \xi \land \normInf[(\lambda,\xi)] \leq 1\right).
  \end{align*}
  Now, we apply \cref{lemma:linearAbstraction} to the polynomial constraining \((x,u)\) to obtain:
  \begin{prooftree}
    \AX$\zonotope{p(\constSamplingTime, 0) + \opIntMid{\intRemainder} + \opIntMid{\intA(\constSamplingTime)}}{[\nabla_{\lambda}p(\constSamplingTime, 0), \frac{1}{2}\opIntMid{\intRemainder},\frac{1}{2} \opIntMid{\intA(\constSamplingTime)}]}(x) \fCenter \vdash  \zonotope{\vecZonoA_x}{\matrixZonoA_x}(x)$
    \UI$\TM{p}{I}{x,u}[\constSamplingTime] \fCenter \vdash \exists u \formulaZonoA(x,u)$
  \end{prooftree}
  \qed
\end{proof}
We now turn our attention to second premise in \cref{thm:rci}.
Unlike the first premise, which involves a single sampling instant, this condition must hold for every \(t \in [0,\constSamplingTime]\).
In other words, we must enclose the entire time-tube of states captured by \(\setReach[\cinterval[0][\constSamplingTime]]\).
We tackle this problem by constructing a single zonotope that encloses the reachable set at every time instant \(t \in [0,\constSamplingTime]\).
\begin{theoremrep}[Zonotope reachable set for time intervals] \label{thm:reach_interval}
  Let \((p,\intA)\) be a Taylor model for the ODE system \(x'=f(x) \; \& \; t \leq \constSamplingTime\). We define the remainder polynomial
  \begin{align*}
    &r(t,\lambda,\xi)
      \define p(t,\lambda) + \opIntMid{\intA(t)} + \frac12\,\opIntRad{\intA(t)}^{\top}\xi - p(0,0) - \opIntMid{\intA(0)} \\
             &\quad - \timeDerivative{p}(0,0)\,t
             - \timeDerivative{\opIntMid{\intA(t)}}\bigl|_{t=0}\,t
             - \nabla_{\lambda}p(0,0)^{\top}\lambda
             - \tfrac12\,\opIntRad{\intA(0)}^{\top}\xi .
  \end{align*}
  To bound its range over the domain \(\intB \define [0,\constSamplingTime]\times[-1,1]^n\times[-1,1]^n\),
  we introduce the error interval
  \[
    \intRemainder
      \define \opIntEval{\,r(t,\lambda,\xi)\,}
               {\intB}
               {(t,\lambda,\xi)}.
  \]
  Finally, we define the \dL center vector and generator matrix
  \begin{align*}
    \vecZonoB &\define p(0,0) + \opIntMid{\intA(0)} + \opIntMid{\intRemainder}, \\
    \matrixZonoB &\define [\timeDerivative{p}(0,0) + \timeDerivative{\opIntMid{\intA(t)}}|_{t=0}, \; \nabla_\lambda p(0,0), \; \frac{1}{2} \opIntRad{\intA(0)}, \frac{1}{2}\opIntRad{\intRemainder}].
  \end{align*}
  Then, the following is a sound derived proof rule of \dL:
  \begin{prooftree}
    \AX$\fCenter \formulaZonoB(x) \vdash X(x)$
    \noLine
    \UI$\fCenter \formulaZonoA(x,u) \vdash \exists \lambda \left(\TM{p}{\intA}{x,u,\lambda}[0] \land \normInf[\lambda] \leq 1\right)$
    \noLine
    \UI$\fCenter \TM{p}{\intA}{x, u,\lambda},\; 0 \leq t \leq \Delta t,\; \normInf[\lambda] \leq 1\vdash \TMDerivative{p}{\intA}{x, u,\lambda}$
    \UI$\fCenter \formulaZonoA(x,u), t=0 \vdash  [\plant]X(x)$
  \end{prooftree}
\end{theoremrep}
\begin{proof}
  First, by equivalent rewriting we obtain
  \begin{align*}
    \TM{p}{I}{x,u} &\equiv \exists \lambda \left(\opIntLower{\intA}(t) \leq (x,u) - p(t,\lambda) \leq \opIntUpper{\intA}(t) \land \normInf[\lambda] \leq 1\right) \\
    &\equiv \exists \xi \exists \lambda \left((x,u) = p(t,\lambda) + \opIntMid{\intA(t)} + \frac{1}{2} \opIntRad{\intA(t)} \xi \land \normInf[(\lambda,\xi)] \leq 1\right).
  \end{align*}
  Then, we can use \cref{lemma:linearAbstraction} on the polynomial constraining \((x,u)\):
  \begin{align*}
    q(t,\lambda,\xi) &= p(t,\lambda) + \opIntMid{\intA(t)} + \frac{1}{2} \opIntRad{\intA(t)}^\top \xi \\
    &\in q(0,0,0) + \nabla_{(t,\lambda, \xi)}q(0,0,0) \cdot (t,\lambda,\xi)+ \opIntMid{\intRemainder} + \frac{1}{2}\opIntRad{\intRemainder}\eta.
  \end{align*}
  Finally, using this over-approximation, by a cut and some contextual reasoning we obtain
  \begin{prooftree}
    \AX$\zonotope{p(t, 0) + \opIntMid{\intRemainder} + \opIntMid{\intA(t)}}{(\nabla_{\lambda}p(t, 0), \frac{1}{2}\opIntMid{\intRemainder},\frac{1}{2} \opIntMid{\intA(t)})}(x) \fCenter \vdash X(x)$
    \UI$\TM{p}{I}{x,u} \fCenter \vdash X(x)$
  \end{prooftree}
  \qed
\end{proof}
  
\paragraph{Witness checks for arithmetic goals.}
Recall that our earlier theorems translate the reachability problem into a single question of zonotope containment.
In practice, the zonotopes that bound control envelopes usually involve on the order of ten to twenty generators.
Consequently, the resulting formulas can become quite large.
A straightforward quantifier-elimination procedure is not a good fit for this problem, because its computational costs are prohibitive.
A common approach in reachability analysis is to invoke a witness theorem \cite[Cor. 4]{sadraddiniLinearEncodingsPolytope2019}: for zonotopes \(\formulaZonoA(x), \; \formulaZonoB(x)\) the containment \[\forall x \big(\formulaZonoA(x) \rightarrow \formulaZonoB(x)\big)\] holds whenever one can find a matrix \(\Gamma\) and vector \(\beta\) that satisfy 
\begin{equation} \label{eq:sadraddini_zono_cont}
	\matrixZonoB\Gamma = \matrixZonoA,\quad \vecZonoB - \vecZonoA = \matrixZonoB\beta, \quad \normInf[(\Gamma, \beta)] \leq 1.
\end{equation}

These linear formulas define a witness \((\Gamma, \beta)\), which can be computed with an efficient linear program.
To obtain a rigorous proof, we must produce an exact witness.
That means solving the linear program with exact rational arithmetic --- an operation that is orders of magnitude slower than solving with floating-point numbers.
Instead of insisting on exact equality, we weaken the witness condition slightly, allowing for small perturbations.
These relaxed conditions can be solved efficiently in floating-point arithmetic with a numerical linear programming solver and then rationalized with the residual margin ensuring the conditions still hold.
\begin{theoremrep}[Zonotope containment] \label{thm:zonotope}
  The following is a derived proof rule of \dL:
    \begin{prooftree}
        \AX$\fCenter \vdash \matrixZonoB\matrixZonoB^+=I$
        \noLine
        \UI$\fCenter \vdash \vecZonoB - \vecZonoA = \matrixZonoB\beta$
        \noLine
        \UI$\fCenter \vdash \normInf[\matrixZonoB \Gamma - \matrixZonoA] \leq \varepsilon$
        \noLine
        \UI$\fCenter \vdash \normInf[(\Gamma, \beta)] \leq 1
        - \varepsilon\normInf[\matrixZonoB^+]$
        \UI$\fCenter \formulaZonoA(x) \vdash \formulaZonoB(x)$
      \end{prooftree}
\end{theoremrep}
\begin{proof}
  First, we fix an arbitrary \(\lambda\) with \(\normInf[\lambda] \leq 1\) for the zonotope in the antecedent. Next, we cut in two additional assumptions:
  \begin{prooftree}
    \AXC{\(\vdash \normInf[\matrixZonoA - \matrixZonoB\Gamma] \leq \epsilon\)}
    \LeftLabel{\(\R\)}
    \UIC{\(\vdash \exists R (\normInf[R] \leq \epsilon \land \matrixZonoA =
    \matrixZonoB \Gamma + R)\)}
    \AXC{\(\vdash \vecZonoB = \matrixZonoB \beta + \vecZonoA\)}
    \AXC{\dots}
    \LeftLabel{\ruleref{cut},\ruleref{cut}}
    \TIC{\(\normInf[\lambda] \leq 1\vdash \exists \mu:( \matrixZonoA\lambda +
    \vecZonoA = \matrixZonoB \mu + \vecZonoB \land \normInf[\mu]\leq 1)\)}
  \end{prooftree}

Then, we fix an arbitrary \(R\) with \(\normInf[R] \leq \epsilon\),
choose \(\mu = (\Gamma + \matrixZonoB^+R) \lambda -
\beta\) for some \(\Gamma \in \R\) and \(\beta \in \R\) and split the
target into two subgoals:
\begin{enumerate}
  \item Firstly, we show that the equality in A holds:
    \begin{prooftree}
      \AXC{*}
      \LeftLabel{\(\R\)}
      \UIC{\(\matrixZonoB\matrixZonoB^+=I\vdash (\matrixZonoB \Gamma + R)\lambda + \vecZonoA = \matrixZonoB
          ((\Gamma + \matrixZonoB^+R)
      \lambda -\beta) + \matrixZonoB \beta + \vecZonoA\)}
      \AXC{\(\matrixZonoB\matrixZonoB^+=I\)}
      \LeftLabel{\(\R\),\ruleref{cut}}
      \BIC{\(\vecZonoB = \matrixZonoB \beta + \vecZonoA, \matrixZonoA = \matrixZonoB \Gamma + R\vdash
          \matrixZonoA\lambda + \vecZonoA = \matrixZonoB ((\Gamma + \matrixZonoB^+R)
      \lambda -\beta) + \vecZonoB\)}
    \end{prooftree}
  \item Secondly, we show that the zonotope inequality holds:
    \begin{prooftree}
      \AXC{\(\vdash \normInf[(\Gamma,\beta)] \leq 1 -
      \epsilon \normInf[\matrixZonoB^+]\)}
      \LeftLabel{\(\R\)}
      \UIC{\(\normInf[\lambda] \leq 1, \normInf[R] \leq \epsilon
          \vdash \normInf[(\Gamma,\beta)]
          \normInf[\lambda] +
          \normInf[\matrixZonoB^+]\normInf[R]\normInf[\lambda] \leq 1\)}
      \UIC{\(\normInf[\lambda] \leq 1, \normInf[R] \leq \epsilon
      \vdash \normInf[(\Gamma,\beta)]
      \normInf[(\lambda,-1)] +
      \normInf[(\matrixZonoB^+R,0)]\normInf[(\lambda,-1)] +
  \normInf[\beta] \leq 1\)}
      \LeftLabel{\(\R\),\(\norm\)}
      \UIC{\(\normInf[\lambda] \leq 1, \normInf[R] \leq \epsilon
      \vdash \normInf[(\Gamma + \matrixZonoB^+R,\beta) (\lambda, - 1)] \leq 1\)}
      \UIC{\(\normInf[\lambda] \leq 1, \normInf[R] \leq \epsilon
      \vdash \normInf[(\Gamma + \matrixZonoB^+R) \lambda - \beta] \leq 1\)}
    \end{prooftree}
\end{enumerate}
\end{proof}
Equipped with \cref{thm:zonotope}, we have all the necessary ingredients to carry out an end-to-end verification of a concrete system.
\section{Evaluation}
In this section, we compute the control envelopes of two examples and demonstrate how to verify them using the theorem from \cref{sec:verification}.
The key challenge here is to handle the different representations that are used in reachability tools and the theorem prover KeYmaera X. 

To compute RCI sets for sampled-data systems, we use the approach by Schäfer et al. \cite{Schaefer2024a}.
The authors represent both the RCI and over-approximations of the reachable sets as zontopes.
Using the witness conditions \eqref{eq:sadraddini_zono_cont}, the one-step invariance, the one-step safety and control-admissibility in \cref{def:RCI} are formulated as constraints in an optimization problem returning an RCI set with maximum volume.
In order to verify the envelope in \dL's formalism, we need a post-processing step: we rationalize center vector and generator matrix of the output zonotopes\footnote{A suitable operator is implemented in the MATLAB function \texttt{rat}, see \url{https://de.mathworks.com/help/matlab/ref/rat.html}.}.
Then, we compute a provably correct \dL Taylor model using the approach implemented in KeYmaera~X\footnote{\url{https://github.com/LS-Lab/KeYmaeraX-release/blob/master/keymaerax-core/src/main/scala/org/keymaerax/btactics/TaylorModel.scala}}. Finally, we verify the zonotope containment visually.

\subsection{Double Integrator}

The dynamics of the double integrator are governed by the following differential equations \cite[Sec.~V.A]{Gruber2021_CSL}:
\begin{equation*} \label{eq:doubleIntegrator}
	\begin{split}
		& x'_1 = x_2 + w_1, \\
		& x'_2 = \frac{1}{m} u + w_2,
	\end{split}
\end{equation*}
where the system states are the position $x_1$ and the velocity $x_2$ of the point-mass, the system input is the force $u = F$, and the weight of the point-mass is $m = 1\kg$. 
The state constraints are $\abs{x_1} \leq 1\m$ and $\abs{x_2} \leq 1\ms$, the input constraint is $\abs{u}\leq 1 \Ne$, and the set of disturbances is $w_1 \in [-0.1,0.1]\ms$ and $w_2 \in [-0.1,0.1]\mss$. 
The sampling time is $\constSamplingTime=\unit[0.1]{s}$.
In \dL, we use a slightly enlarged initial condition given by the formula
\[X_0(x,u) \equiv \exists \lambda \left(x_1 = \frac{11}{10}\lambda_1 \land x_2 = \frac{11}{10}\lambda_2 \land u = \frac{11}{10}\lambda_3 \land \normInf[\lambda] \leq 1\right).\]
By Picard iteration we obtain the following polynomial approximation
\begin{align*}
  p_{x_1}(t, \lambda) &= \lambda_1 + t\lambda_2 + \frac{t^2}{2}\lambda_3,\\
  p_{x_2}(t, \lambda) &= \lambda_2 + t\lambda_3,\\
  p_{u}(t, \lambda) &= \lambda_3.
\end{align*}
With interval arithmetic we then obtain the following provable error bounds:
\begin{align*}
  \opIntLower{\intA}_{x_1}(t) = -101020\cdot 10^{-11}t &\quad\opIntUpper{\intA}_{x_1}(t) = 101020\cdot 10^{-11}t,\\
  \opIntLower{\intA}_{x_2}(t) = -10^{-6}t &\quad \opIntUpper{\intA}_{x_2}(t) = 10^{-6}t,\\
  \opIntLower{\intA}_{u}(t) = -10^{-6}t &\quad  \opIntUpper{\intA}_{u}(t) = 10^{-6}t.
\end{align*}
Together, the polynomial \(p\) and the error interval \(\intA(t)\) yield the Taylor model \((p,\intA)\) on \([0,\constSamplingTime]\).
We numerically compute an RCI set and, using the Taylor model, check the premises of \cref{thm:reach_discrete} and \cref{thm:reach_interval} (See \cref{fig:double_integrator_rci}).
Note that, although we visually verify containment here, a formal proof can be derived directly by applying \cref{thm:zonotope}.
\subsection{Controlled Jet Engine}

Next, we consider the Moore-Greitzer model of a jet engine \cite{Krstic1995} whose dynamics are governed by
\begin{align} \label{eq:jetEngine}
	\begin{split}
		x'_1 &= -x_2 - \frac{3}{2} x_1^2 - \frac{1}{2} x_1^3 + w, \\
		x'_2 &= u,
	\end{split}
\end{align}
where the system states are the mass flow $x_1$ and the pressure rise $x_2$.
The state constraints are $\abs{x_1} \leq 0.2$ and $\abs{x_2} \leq 0.2$, the input constraint is $\abs{u}\leq 0.3$, and the set of disturbances is $w \in [-0.025,0.0.025]$. 
Measurements are taken with a sampling time of $\constSamplingTime = 0.1$ time units. Computing the associated \dL Taylor model with conservative initial conditions
\[X_0(x,u) \equiv \exists \lambda \left(x_1 = \frac{3}{10}\lambda_1 \land x_2 = \frac{3}{10}\lambda_2 \land u = \frac{3}{10}\lambda_3 \land \normInf[\lambda] \leq 1\right)\]
yields the following provable polynomial approximation
\begin{align*}
  p_{x_1}(t, \lambda) &= \lambda_1 - t\lambda_2 + \frac{3t}{2}\lambda_1^2 - \frac{t^2}{2}\lambda_3 - \frac{3t^2}{2}\lambda_1\lambda_2 - \frac{t}{2}\lambda_1^3\\
  p_{x_2}(t, \lambda) &= \lambda_2 + t\lambda_3\\
  p_{u}(t, \lambda) &= \lambda_3
\end{align*}
with error bounds
\begin{align*}
  \opIntLower{\intA}_{x_1}(t) = -28605705206\cdot 10^{-11}t &\quad\opIntUpper{\intA}_{x_1}(t) = 27585076206\cdot 10^{-11}t,\\
  \opIntLower{\intA}_{x_2}(t) = -10^{-6}t &\quad \opIntUpper{\intA}_{x_2}(t) = 10^{-6}t,\\
  \opIntLower{\intA}_{u}(t) = -10^{-6}t &\quad  \opIntUpper{\intA}_{u}(t) = 10^{-6}t.
\end{align*}
Again, using the Taylor model (\(p,\intA\)), we can compute the zonotopes form \cref{thm:reach_discrete} and \cref{thm:reach_interval} (See \cref{fig:jet_engine_rci}).
\begin{figure}
  \centering
  \begin{subfigure}[b]{0.48\linewidth}
    \centering
    \includegraphics[width=\linewidth]{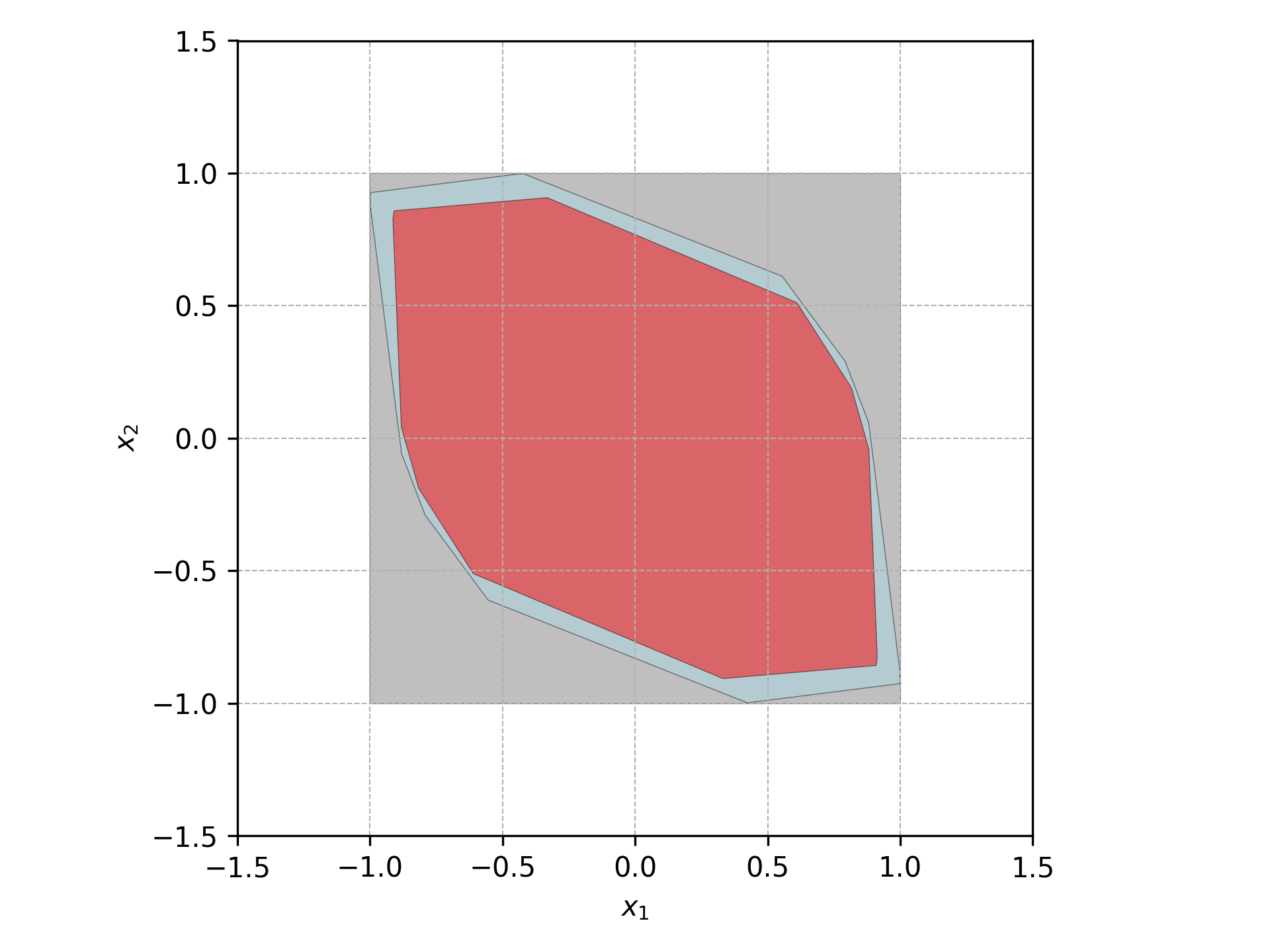}
    \caption{Double integrator}
    \label{fig:double_integrator_rci}
  \end{subfigure}
  \hfill
  \begin{subfigure}[b]{0.48\linewidth}
    \centering
    \includegraphics[width=\linewidth]{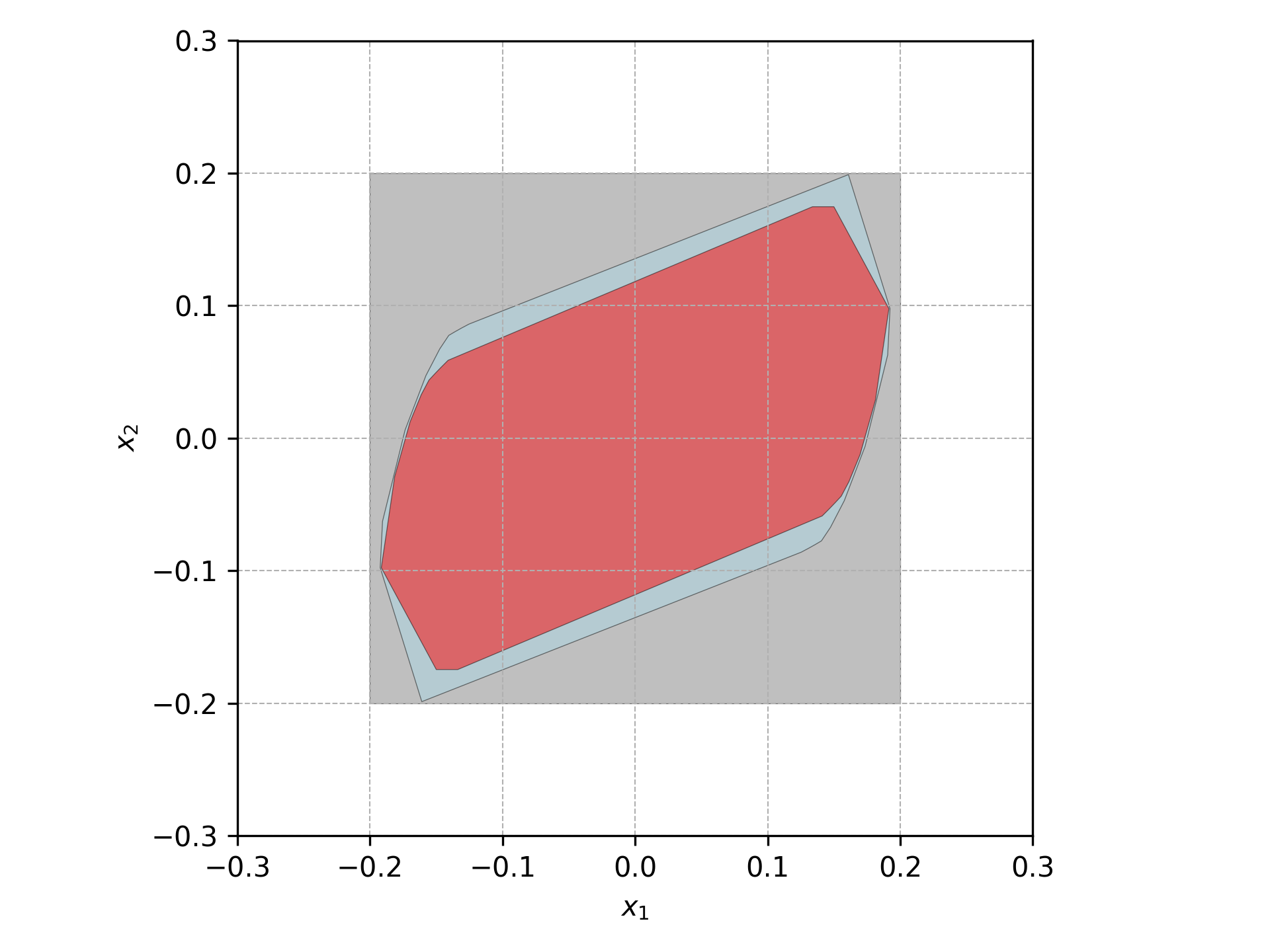}
    \caption{Jet engine}
    \label{fig:jet_engine_rci}
  \end{subfigure}
  \caption{%
    Comparison of the numerically computed robust control-invariant set (blue), its Taylor-model
    reachability over-approximation after one sampling period (red), and the safety region (gray).
  }
  \label{fig:two-panel}
\end{figure}
\section{Related Work}
\paragraph{Reachability analysis.} Reachability analysis itself can be categorized into several
techniques: simulation-based techniques \cite{Girard2006b},
Hamilton-Jacobi techniques \cite{Chen2018}, and set propagation
techniques \cite{Althoff2021b}. The disadvantage of simulation-based
techniques and Hamilton-Jacobi techniques is that they scale
exponentially in the number of continuous state variables, while many
set propagation techniques scale polynomially \cite{Althoff2021b}.
Because the main purpose of including reachability analysis into
theorem proving is to deduce properties of complex dynamics with
potentially many continuous state variables, we will focus on set
propagation techniques. This technique is also currently
predominantly used in the International Competition on Verifying
Continuous and Hybrid Systems \cite{Althoff2024a,Geretti2024}.
Further advantages of set-based reachability analysis are that it can
be fully automated \cite{Wetzlinger2021,Wetzlinger2023a} and easily
interpreted due to its resemblance with the numerical simulation of systems.
\paragraph{Computing Invariant Sets Using Reachability Analysis}
Computing invariant sets has a rich history in control theory due to
their many applications: they serve as terminal regions in model
predictive control \cite{Rawlings2022} or are employed as part of
supervisory safety-filters in, e.g., learning-based control
\cite{Mitchell2016,Wabersich2018}.
Since we aim at reducing conservatism, we focus on the computation of
the largest invariant set, also known as the maximal invariant set.
The maximal invariant set can be obtained using the standard set
recursion introduced in \cite{Bertsekas1972,Rungger2017}.
However, this procedure usually fails to terminate in finite time and
the computational complexity of the required set operations restricts
the applicability to low-dimensional systems.

The latter also holds for algorithms for approximating the maximal
invariant set by gridding the sate space \cite{Bravo2005,Brown2023}.
Thus, most approaches in the literature formulate an optimization
problem to compute a possibly large invariant set.

The most popular set representations of the invariant set are
ellipsoids \cite{Chen2003,Lazar2018} and polytopes
\cite{Rakovic2010_TAC,Gupta2019,Fiacchini2010_Automatica,BenSassi2012_SCL}.
Due to their low representation complexity, algorithms using
ellipsoids as the set representation scale better to
higher-dimensional systems at the cost of more conservative results.
On the other hand, polytopic invariant sets enable more flexibility
and, thus, larger invariant sets while sacrificing computational efficiency.

Level sets are an even more flexible representation of invariant sets
that can be computed using Hamilton-Jacobi reachability analysis
\cite{Xue2019,Xue2021} and control-barrier functions
\cite{Ames2019,Rauscher2016,Xu2015}.
To circumvent the exponential complexity of solving the associated
partial differential equation numerically (Hamilton-Jacobi
reachability analysis) or synthesizing the barrier certificate from
simulations \cite{Ames2019}, the computation of an invariant set can
be relaxed into a (sequence of) semidefinite program(s) using
sum-of-squares programming.
However, an invariance-enforcing controller must be designed prior to
computing the invariant set \cite{Xue2019,Xue2021} or the approach
suffers from poor scalability due to the large number of variables of
the semi-definite program \cite{Korda2014,Ahmadi2017}.

The approaches reviewed above typically either consider discrete-time
systems, e.g., \cite{Fiacchini2010_Automatica,Lazar2018} and, thus,
do not check constraint satisfaction in between sampling times, or
enforce invariance for the continous-time dynamical system, e.g.,
\cite{Ames2019,BenSassi2012_SCL}, which is unnecessarily conservative.
As an alternative, invariant sets for sampled-data systems have been
characterized in \cite{Rakovic2017}: sampled-data systems are
continuous-time systems that are controlled by a digital controller;
similary, measurements are only taken at discrete points in time
\cite{Mitchell2016}.
Crucially, the system can leave the invariant set in between sampling
times, which reduces conservatism.
This notion of invariance has been employed in \cite{Gruber2021a} to
introduce so-called safe sets of linear systems.
Since this approach represents the invariant set as a zonotope,
invariant sets of high-dimensional systems can be computed efficiently.
This concept has been extended to ellipsoidal sets in
\cite{Kulmburg2024_TAC} and nonlinear systems in
\cite{Schaefer2024b,Schaefer2024a}.
\paragraph{Deductive verification of control problems} 
Differential dynamic logic (\dL) has been successfully applied in several control domains, including air traffic control, train control, and ground robots to formally prove safety properties \cite{bohrerFormalSafetyNet2019,jeanninFormalVerificationACAS2015,kabraVerifiedTrainControllers2022a}.
It has also been employed for the deductive verification of control system stability \cite{tanDeductiveStabilityProofs2021,tanVerifyingSwitchedSystem2022}.
The control-system meta-model has been extended to incorporate the environment, with a focus on identifying conditions that prevent proofs of safety from being invalidated by modeling errors \cite{selvarajProvingThatUnsafe2024}.
In contrast, our work focuses on a simplified controller-plant model and formalizes in \dL the verification of synthesized controllers by reducing closed-loop analysis to continuous-time safety and discrete-time invariance.

The problem of control-envelope synthesis has been studied in the context of~\dL before \cite{arechigaUsingVerifiedControl2014,kabraCESARControlEnvelope2024}.
Both of these works approach the problem from a logical perspective and do not leverage existing techniques and tools developed in the field of reachability analysis field.
By comparison, our work integrates these two viewpoints by combining established numerical methods.

\section{Conclusion}
In this paper, we established a link between two traditionally separate research fields: reachability analysis and theorem proving.
We showed how zonotope-based reachable-set computations can be encoded in the \dL formalism and how a control envelope can be formally verified.
Although the differing levels of representation between these tools posed nontrivial technical challenges, our case studies demonstrate that these obstacles can be overcome.
By combining the computational efficiency of reachability analysis with the deductive rigor of theorem proving, we achieve a verification workflow that is both scalable and formally sound.
This work represents only the first step toward a more unified formal-methods ecosystem.
In future work, we plan to explore how reachability methods can be even more tightly integrated, reducing the boundaries between research fields.
\section*{Acknowledgements}
This work was supported  by the German Research Foundation - SFB 1608 - under grant number 501798263 and by an Alexander von Humbolt Professorship.
Additionally, we would like to thank Fabian Immler for his valuable Taylor model implementation in KeYmaera~X.
\renewcommand{\doi}[1]{doi: \href{https://doi.org/#1}{\nolinkurl{#1}}}
\bibliographystyle{splncs04}
\bibliography{reachability, althoff_own, althoff_other, lukas}

\newpage
\appendix
\section{Proof rules}
\ruleset{impliesLeft}{\(\rightarrow\mathrm{L}\)}
\ruleset{andLeft}{\(\land \mathrm{L}\)}
\ruleset{orRight}{\(\lor\)R}
\ruleset{id}{\textup{id}}
\ruleset{weakeningLeft}{\textup{WL}}
\ruleset{forallRight}{\(\forall\)\textup{R}}
\ruleset{existsLeft}{\(\exists\)\textup{L}}
\ruleset{existsRight}{\(\exists\)\textup{R}}
\ruleset{cut}{\textup{cut}}
\ruleset{loop}{\textup{loop}}
\ruleset{g}{\textup{G}}
\ruleset{monotonicity}{\textup{MR}}
\ruleset{existsGeneralization}{\(\exists\)\textup{G}}

\begin{theorem}[Propositional rules]
  The following are sound proof rules of \dL:
  \begin{mathpar}
    \ruletag{impliesRight}
    \AXC{\(\Gamma, P \vdash Q, \Delta\)}
    \UIC{\(\Gamma \vdash P \rightarrow Q, \Delta\)}
    \DisplayProof
    \and
    \ruletag{impliesLeft}
    \AXC{\(\Gamma \vdash P, \Delta\)}
    \AXC{\(\Gamma, Q \vdash \Delta\)}
    \BIC{\(\Gamma, P \rightarrow Q \vdash \Delta\)}
    \DisplayProof
    \and
    \ruletag{andLeft}
    \AXC{\(\Gamma, P, Q \vdash \Delta\)}
    \UIC{\(\Gamma, P \land Q \vdash \Delta\)}
    \DisplayProof
    \and
    \ruletag{andRight}
    \AXC{\(\Gamma \vdash P, \Delta\)}
    \AXC{\(\Gamma \vdash Q, \Delta\)}
    \BIC{\(\Gamma \vdash P \land Q, \Delta\)}
    \DisplayProof
    \and
    \ruletag{orRight}
    \AXC{\(\Gamma \vdash P, Q, \Delta\)}
    \UIC{\(\Gamma \vdash P \lor Q, \Delta\)}
    \DisplayProof
    \and
    \ruletag{id}
    \AXC{*}
    \UIC{\(\Gamma, P \vdash P, \Delta\)}
    \DisplayProof
    \and
    \ruletag{weakeningLeft}
    \AXC{\(\Gamma \vdash \Delta\)}
    \UIC{\(\Gamma, P \vdash \Delta\)}
    \DisplayProof
    \and
    \ruletag{forallRight}
    \AXC{\(\Gamma \vdash p(y), \Delta\)}
    \UIC{\(\Gamma \vdash \forall x p(x), \Delta\)}
    \DisplayProof

    \ruletag{existsLeft}
    \AXC{\(\Gamma, p(y) \vdash \Delta\)}
    \UIC{\(\Gamma, \exists x p(x) \vdash \Delta\)}
    \DisplayProof

    \ruletag{existsRight}
    \AXC{\(\Gamma \vdash p(e), \Delta\)}
    \UIC{\(\Gamma \vdash \exists x p(x), \Delta\)}
    \DisplayProof

    \and
    \ruletag{cut}
    \AXC{\(\Gamma \vdash C, \Delta\)}
    \AXC{\(\Gamma, C \vdash \Delta\)}
    \BIC{\(\Gamma \vdash \Delta\)}
    \DisplayProof
  \end{mathpar}

  \begin{mathpar}
    \ruletag{loop}
    \AXC{\(\Gamma \vdash J, \Delta\)}
    \AXC{\(J \vdash [a]J\)}
    \AXC{\(J \vdash P\)}
    \TIC{\(\Gamma \vdash [a^*]P, \Delta\)}
    \DisplayProof
    \and
    \ruletag{g}
    \AxiomC{\(\vdash P\)}
    \UnaryInfC{\(\Gamma \vdash [a]P, \Delta\)}
    \DisplayProof
    \and
    \ruletag{monotonicity}
    \AXC{\(\Gamma \vdash [a]Q, \Delta\)}
    \AXC{\(Q\vdash P\)}
    \BIC{\(\Gamma \vdash [a]P, \Delta\)}
    \DisplayProof

    \ruletag{existsGeneralization}
    \AXC{\(\Gamma, \exists y p(x,y) \vdash \Delta\)}
    \UIC{\(\Gamma, p(x,y) \vdash \Delta\)}
    \DisplayProof
  \end{mathpar}
\end{theorem}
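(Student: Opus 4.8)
The plan is to establish soundness of the list semantically, one rule at a time, using the standard reading of a sequent $\Gamma \vdash \Delta$ as the formula $\bigwedge_{\phi\in\Gamma}\phi \rightarrow \bigvee_{\psi\in\Delta}\psi$, so that a rule is sound exactly when validity of all premise sequents entails validity of the conclusion sequent. I would argue by contraposition throughout: fix a state $\omega$ falsifying the conclusion and exhibit a premise that $\omega$, or a suitable variant of $\omega$, falsifies.

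For the purely propositional rules \ruleref{impliesRight}, \ruleref{impliesLeft}, \ruleref{andLeft}, \ruleref{andRight}, \ruleref{orRight}, the argument is immediate from the truth-functional clauses for $\rightarrow$, $\land$, $\lor$: if $\omega$ satisfies every antecedent and no succedent of the conclusion, then unfolding the principal connective pins down the truth values of the newly exposed subformulas (e.g.\ $\omega\models P$ and $\omega\not\models Q$ for \ruleref{impliesRight}, a case split on the two branches for \ruleref{impliesLeft} and \ruleref{andRight}), which is precisely what it takes for $\omega$ to falsify the indicated premise. The structural rules \ruleref{id} and \ruleref{weakeningLeft} are trivial: $\omega\models P$ makes $P$ true on the right, and enlarging the antecedent can only shrink the set of states to be considered. \ruleref{cut} is a one-line case split on whether $\omega\models C$: if so, use the second premise; if not, use the first.

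The first-order rules \ruleref{forallRight}, \ruleref{existsLeft}, \ruleref{existsRight}, and \ruleref{existsGeneralization} are where the real bookkeeping lies, and I would lean on the coincidence lemma and the substitution lemma of \dL. For \ruleref{forallRight}, the side condition that $y$ does not occur free in $\Gamma,\Delta$ lets me move from a state $\omega$ falsifying $\forall x\,p(x)$ to the variant $\omega' = \omega[y\mapsto d]$, where $d$ witnesses the failure of the universal; coincidence leaves $\Gamma,\Delta$ unchanged, while the substitution lemma identifies $\omega'\models p(y)$ with $\omega[x\mapsto d]\models p(x)$, which is false, so $\omega'$ falsifies the premise. \ruleref{existsLeft} is the exact dual, and \ruleref{existsRight} runs the substitution lemma in the forward direction, taking $\omega(e)$ as the witness for $\exists x\,p(x)$; \ruleref{existsGeneralization} observes that a state satisfying $p(x,y)$ with $y$ free already witnesses $\exists y\,p(x,y)$. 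For the hybrid-program rules I would use the relational semantics of programs: \ruleref{g} holds because a valid postcondition is true in every successor state; \ruleref{monotonicity} because validity of $Q\vdash P$ upgrades ``$Q$ in every $a$-successor'' to ``$P$ in every $a$-successor''; and \ruleref{loop} follows from the inductive characterization $\llbracket a^*\rrbracket = \bigcup_{n\ge 0}\llbracket a\rrbracket^{\,n}$, proving by induction on $n$ that the invariant $J$ persists through every iteration (premises one and two) and then invoking $J\vdash P$ to obtain $[a^*]P$.

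The main obstacle is not conceptual depth but getting the variable-condition and substitution accounting exactly right in the quantifier rules --- in particular that the freshness hypotheses in \ruleref{forallRight} and \ruleref{existsLeft} are precisely what the coincidence lemma requires, and that $p(e)$ and $p(y)$ denote admissible substitution instances so the substitution lemma applies. All of these rules are entirely standard in the \dL calculus, so beyond carefully discharging these side conditions the remaining verification is routine and can be cited to the literature \cite{platzerCompleteUniformSubstitution2017b,platzerLogicalFoundationsCyberPhysical2018a}.
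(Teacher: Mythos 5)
Your soundness sketch is correct, and it is the standard semantic argument: read a sequent $\Gamma \vdash \Delta$ as $\bigwedge\Gamma \rightarrow \bigvee\Delta$, argue by contraposition for the propositional and structural rules, invoke the coincidence and substitution lemmas (with the freshness side conditions you flag) for $\forall$R, $\exists$L, $\exists$R and $\exists$G, and use the relational semantics with induction over $\llbracket a\rrbracket^{\,n}$ for the loop rule, plus validity-in-all-states for G and MR. The paper itself offers no proof of this theorem: these rules are stated as standard background of the \dL sequent calculus and implicitly deferred to the literature \cite{platzerCompleteUniformSubstitution2017b,platzerLogicalFoundationsCyberPhysical2018a}, which is precisely where your argument, including the variable-condition bookkeeping, is carried out in detail, so your proposal matches the intended justification.
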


\section{Axioms}

\axset{normTriangle}{$\norm_{\leq}$}
\axset{normHomogenity}{$\norm_{=}$}
\axset{normOp}{$\norm_{\leq A}$}
\axset{normDef}{$\normInf[\,\cdot\,]$}
\begin{theorem}[Arithmetic axioms]
  The following are derived axioms of \dL:
  \begin{align*}
    &\blue{\normInf[x + y]} \leq \normInf[x] + \normInf[y]
    \axtag{normTriangle}\\
    &\blue{\normInf[cx]} = \abs{\,c\,} \cdot \normInf[x] \axtag{normHomogenity}\\
    &\blue{\normInf[Ax]} \leq \normInf[A] \cdot \normInf[x]
    \axtag{normOp}\\
    &\blue{\normInf[x] \leq 1} \leftrightarrow -1 \leq x \leq 1 \axtag{normDef}
  \end{align*}
\end{theorem}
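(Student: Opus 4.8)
The plan is to establish each of the four identities by unfolding the syntactic definitions of the infinity norm and the absolute value and then reducing the resulting statement to a first-order formula over the reals, which \dL discharges by quantifier elimination. Recall that for a vector $x$ of a fixed concrete dimension $n$ the term $\normInf[x]$ abbreviates $\max(\abs{x_1},\dots,\abs{x_n})$, that for an $m\times n$ matrix $A$ the term $\normInf[A]$ abbreviates the induced operator norm, i.e.\ the maximal absolute row sum $\max_i(\abs{A_{i1}}+\dots+\abs{A_{in}})$, and that each $\abs{\,\cdot\,}$ in turn abbreviates a two-case real term. Under these abbreviations every one of the four claims becomes, for each fixed dimension, a valid quantifier-free formula of real-closed-field arithmetic, so soundness and completeness of \dL's arithmetic rule yields a proof. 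Strictly speaking this is a proof schema, one instance per concrete dimension, which is exactly what the applications in \cref{sec:verification} require since all dimensions there are concrete; a dimension-uniform treatment proceeds instead by meta-level induction on the dimension, with a quantifier-elimination call at the base case and at the inductive step.

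Rather than appealing to raw quantifier elimination each time, it is cleaner to handle the cases in dependency order. First \axref{normDef}: after unfolding, $\normInf[x]\le 1$ reads $\max_i\abs{x_i}\le 1$, which is equivalent to $\bigwedge_i\abs{x_i}\le 1$, hence to $\bigwedge_i(-1\le x_i\le 1)$, i.e.\ the componentwise $-1\le x\le 1$; each step is an elementary real-arithmetic fact. Next \axref{normHomogenity}: $\normInf[cx]=\max_i\abs{cx_i}=\max_i(\abs{c}\abs{x_i})=\abs{c}\max_i\abs{x_i}=\abs{c}\normInf[x]$, using that the nonnegative factor $\abs{c}$ passes through the maximum. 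From \axref{normDef} and \axref{normHomogenity} one obtains the scaled form $\normInf[x]\le r\leftrightarrow-r\le x\le r$ for $r\ge 0$ (divide by $r$ when $r>0$, treat $r=0$ directly) together with $\normInf[x]\ge 0$; then \axref{normTriangle} follows by setting $a=\normInf[x]$, $b=\normInf[y]$, noting $-a\le x\le a$ and $-b\le y\le b$ componentwise, adding to get $-(a+b)\le x+y\le a+b$, and reading the scaled form backwards to conclude $\normInf[x+y]\le a+b$.

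The main obstacle is \axref{normOp}. It requires committing to the maximal-absolute-row-sum meaning of $\normInf[A]$ and then chaining three estimates: the $i$-th component of $Ax$ satisfies $\abs{(Ax)_i}=\abs{\sum_j A_{ij}x_j}\le\sum_j\abs{A_{ij}}\abs{x_j}$ by the finite triangle inequality, then $\le\bigl(\sum_j\abs{A_{ij}}\bigr)\normInf[x]$ since each $\abs{x_j}\le\normInf[x]$, and finally $\le\normInf[A]\normInf[x]$ since every row sum is bounded by the maximum; taking $\max_i$ on the left yields $\normInf[Ax]\le\normInf[A]\normInf[x]$. The finite triangle inequality and the ``termwise bound implies sum bound'' step are, for each fixed dimension, again just real arithmetic, but they are precisely the part that forces the dimension-indexed schema and does not collapse to a single uniform \dL derivation; making this precise, or carrying out the meta-level induction on the number of rows and columns when a uniform statement is desired, is where essentially all the work lies.
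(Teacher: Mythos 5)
The paper never actually proves this theorem: the four norm axioms are simply asserted in the appendix as ``derived axioms,'' with no accompanying proof environment, so there is no paper proof to compare against. Your argument is correct and is essentially the proof one would expect to be intended: since \dL has no first-class vectors or matrices, \(\normInf[x]\), \(\normInf[A]\), and \(\abs{\,\cdot\,}\) are definitional abbreviations over concretely many scalar components, each axiom unfolds (per fixed dimension) to a valid formula of real arithmetic, and the dependency-ordered elementary derivations you give (\axref{normDef}, then \axref{normHomogenity}, then \axref{normTriangle} via the scaled form, then \axref{normOp} via row sums) avoid blind quantifier-elimination calls. You are also right to flag the one point the paper leaves implicit and that your proof must commit to: \(\normInf[A]\) has to mean the induced operator norm (maximal absolute row sum), not the entrywise maximum, since otherwise \axref{normOp} is false for dimension at least two; this reading is the one actually needed where the axiom is used, namely the bound involving \(\normInf[\matrixZonoB^+]\) in the proof of \cref{thm:zonotope}. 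Your remark that the result is a dimension-indexed schema rather than a single uniform \dL derivation is likewise accurate and consistent with how the axioms are applied to concrete, fixed-dimension zonotopes in \cref{sec:verification}.
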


\axset{boxTest}{\([?]\)}
\axset{boxNonDetAssign}{\([\define\!*]\)}
\axset{boxDX}{\textup{DX}}
\axset{boxDW}{\textup{DW}}
\axset{boxOdeIdm}{\textup{DID}}
\axset{boxAnd}{\([\land]\)}
\begin{theorem}[Box axioms \cite{platzerCompleteUniformSubstitution2017b}]
  The following are sound axioms of \dL:
  \begin{align*}
    &\blue{[?Q]P} \leftrightarrow (Q \rightarrow P) \axtag{boxTest} \\
    &\blue{[x:=*]P} \leftrightarrow \forall x P
    \axtag{boxNonDetAssign} \\
    &\blue{[x'=f(x) \AND Q]P(x)} \rightarrow (Q \rightarrow P(x)) \axtag{boxDX} \\
    &\blue{[x'=f(x) \AND Q]P} \leftrightarrow [x'=f(x) \AND Q](Q
    \rightarrow P) \axtag{boxDW} \\
    &\blue{[x'=f(x) \AND Q][x'=f(x) \AND Q]P} \leftrightarrow [x'=f(x) \AND Q]P \axtag{boxOdeIdm} \\
    &\blue{[a](P\land Q)} \leftrightarrow [a]P \land [a]Q \axtag{boxAnd}
  \end{align*}
\end{theorem}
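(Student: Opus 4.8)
The plan is to establish soundness of all six axioms directly from the denotational semantics of \dL, since each asserts that a formula is true in every state. Recall that $\omega \models [\alpha]P$ holds exactly when $\nu \models P$ for every $\nu$ with $(\omega,\nu) \in \llbracket\alpha\rrbracket$, where $\llbracket ?Q\rrbracket = \{(\omega,\omega) : \omega \models Q\}$, where $\llbracket x := *\rrbracket$ relates $\omega$ to every state agreeing with it off $x$, and where $\llbracket x'=f(x)\AND Q\rrbracket$ relates $\omega$ to $\varphi(r)$ for every solution $\varphi : [0,r] \to \mathrm{States}$ of $x'=f(x)$ of some duration $r \geq 0$ that starts at $\omega$ (off the differential symbol $x'$) and satisfies $Q$ on all of $[0,r]$. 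The strategy is to unfold these definitions at a fixed but arbitrary state $\omega$ and read off the equivalence (or, for \axref{boxDX}, the implication).

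The non-ODE cases are immediate unfoldings. For \axref{boxAnd}, ``$\nu \models P \land Q$ for every successor $\nu$'' is logically the same as ``($\nu \models P$ for every successor) and ($\nu \models Q$ for every successor)'', which is $\omega \models [a]P \land [a]Q$; no assumption on the program $a$ is used. For \axref{boxTest}, the only successor of $?Q$ from $\omega$ is $\omega$ itself and only when $\omega \models Q$, so $[?Q]P$ reduces to ``$\omega \models Q$ implies $\omega \models P$'', i.e.\ $Q \to P$. For \axref{boxNonDetAssign}, the successors of $x:=*$ are exactly the variants of $\omega$ overwriting $x$ by an arbitrary real $d$, and ``all such variants satisfy $P$'' is the definition of $\omega \models \forall x\, P$.

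For the four ODE axioms I would isolate two structural facts about $\llbracket x'=f(x)\AND Q\rrbracket$: (i) the duration-zero run is available whenever the domain constraint holds initially, so $\omega \models Q$ implies $(\omega,\omega) \in \llbracket x'=f(x)\AND Q\rrbracket$ (modulo the value of $x'$, which is why \axref{boxDX} is stated for $P(x)$); and (ii) the domain constraint holds at the endpoint, so $(\omega,\nu) \in \llbracket x'=f(x)\AND Q\rrbracket$ implies $\nu \models Q$. Fact (i) gives \axref{boxDX}: instantiating $\omega \models [x'=f(x)\AND Q]P(x)$ at the duration-zero run under the assumption $\omega \models Q$ yields $\omega \models P(x)$. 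Facts (i) and (ii) give \axref{boxDW}: left-to-right is monotonicity of the box under $P \to (Q \to P)$, and right-to-left uses that every reachable $\nu$ satisfies $Q$ by (ii), so $\nu \models Q \to P$ forces $\nu \models P$. They also give the $\to$ direction of \axref{boxOdeIdm}: reach $\nu$ from $\omega$, note $\nu \models Q$ by (ii), take the duration-zero run at $\nu$, and conclude $\nu \models P$ from the nested box hypothesis.

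I expect the $\leftarrow$ direction of \axref{boxOdeIdm} to be the only real work: from $\omega \models [x'=f(x)\AND Q]P$ one must derive $\omega \models [x'=f(x)\AND Q][x'=f(x)\AND Q]P$, i.e.\ show that composing a solution from $\omega$ to $\mu$ of duration $r_1$ with a solution from $\mu$ to $\nu$ of duration $r_2$ yields a single solution of $x'=f(x)$ from $\omega$ to $\nu$ of duration $r_1+r_2$ satisfying $Q$ throughout. The concatenated curve is continuous because the two pieces agree in value at $\mu$, and differentiable at the junction time $r_1$ because both pieces solve $x'=f(x)$ so the one-sided derivatives there both equal $f$ evaluated at $\mu$; the evolution domain is respected because it was respected on each subinterval whose union is the whole interval. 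With this gluing lemma (and its trivial converse, restriction of a solution to a subinterval) in place, all four ODE equivalences close. As a shortcut, one may instead simply note that these six axioms are precisely sound axioms of \cite{platzerCompleteUniformSubstitution2017b} and import their soundness proof verbatim.
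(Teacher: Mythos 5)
Your proof is correct, but it does genuinely more than the paper, which offers no proof at all: the theorem is stated in the appendix as a collection of sound axioms imported by citation from \cite{platzerCompleteUniformSubstitution2017b}. Your route is a direct semantic soundness check, unfolding $\llbracket ?Q\rrbracket$, $\llbracket x:=*\rrbracket$ and $\llbracket x'=f(x)\AND Q\rrbracket$ at an arbitrary state, and it handles the only delicate points correctly: the zero-duration run argument for \axref{boxDX} (including the remark that the postcondition must not mention the differential symbol $x'$, which is exactly why the axiom is stated with $P(x)$), endpoint satisfaction of the domain constraint for \axref{boxDW}, and the gluing/restriction argument establishing the semigroup property of ODE runs for \axref{boxOdeIdm}, where autonomy of $x'=f(x)$ is what lets you time-shift the second solution piece before concatenating. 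The self-contained argument buys transparency that the bare citation does not; conversely, the paper's choice to import the axioms avoids re-deriving standard metatheory. One caveat on your closing shortcut: \axref{boxTest}, \axref{boxNonDetAssign}, \axref{boxDX}, \axref{boxDW} and \axref{boxAnd} are indeed (instances of) axioms or immediate consequences in the cited uniform-substitution calculus, but the ODE idempotence axiom \axref{boxOdeIdm} is not literally on that axiom list, so it cannot be imported ``verbatim''; it rests precisely on the semantic transitivity-plus-restriction property of ODE runs that your gluing lemma proves (or on a derivation within the calculus), so for that axiom your explicit argument is not optional but the actual content.
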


The following sound proof rule is used to establish validity of Taylor models. 

\begin{theorem}[Closed Differential Induction]
  The following proof rule is sound:
  \ruleset{dIC}{$\overline{\mathrm{dI}}$}
  \begin{mathpar}
    \ruletag{dIC}
    \AXC{\(\Gamma \vdash e \geq 0\)}
    \AXC{\(\Gamma \vdash [x' = f(x) \& Q \land e \geq 0]e' > 0\)}
    \BIC{\(\Gamma \vdash [x' = f(x) \& Q]e \geq 0\)}
    \DisplayProof
  \end{mathpar}
\end{theorem}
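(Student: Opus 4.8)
The goal is to show soundness of the closed differential induction rule \ruleref{dIC}: if $\Gamma \vdash e \geq 0$ and $\Gamma \vdash [x'=f(x) \AND Q \land e \geq 0]e' > 0$ are valid, then $\Gamma \vdash [x'=f(x) \AND Q]e \geq 0$ is valid. The plan is a semantic argument: fix any state $\omega$ satisfying $\Gamma$, and any solution $\varphi : [0,T] \to \R^n$ of the ODE $x' = f(x)$ staying inside the domain $Q$ with $\varphi(0) = \omega$ on the relevant variables. We must show $e \geq 0$ holds along all of $\varphi$. I would proceed by contradiction: suppose the continuous function $g(s) := \llbracket e \rrbracket(\varphi(s))$ becomes negative somewhere on $[0,T]$.

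First I would set up the candidate witness for the contradiction. By the first premise, $g(0) \geq 0$. Let $s^\star := \inf\{s \in [0,T] : g(s) < 0\}$; by continuity of $g$ (which follows from continuity of $\varphi$ and smoothness of the term $e$) this infimum is attained as a boundary point with $g(s^\star) = 0$, and $s^\star < T$ because $g$ is negative arbitrarily close to $s^\star$ from the right. The key observation is that on the closed sub-interval $[0, s^\star]$ the solution $\varphi$ satisfies the augmented domain constraint $Q \land e \geq 0$: it satisfies $Q$ by assumption, and $g(s) \geq 0$ for $s \in [0,s^\star)$ by minimality of $s^\star$, with $g(s^\star) = 0$. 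Hence the restriction $\varphi|_{[0,s^\star]}$ is a legitimate trajectory of the ODE $x' = f(x) \AND Q \land e \geq 0$ starting at $\omega$.

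Next I would invoke the second premise. Since $\varphi|_{[0,s^\star]}$ is a trajectory of $x' = f(x) \AND Q \land e \geq 0$ from $\omega$, and the second premise says $[x'=f(x) \AND Q \land e \geq 0]e' > 0$ holds at every $\Gamma$-state, the differential term $e'$ — i.e. the Lie derivative $\nabla e \cdot f$, which equals $g'(s)$ by the chain rule — satisfies $g'(s) > 0$ for all $s \in [0, s^\star]$, in particular $g'(s^\star) > 0$. But $g(s^\star) = 0$ and $g(s) < 0$ for a sequence of points $s \downarrow s^\star$ forces $g'(s^\star) \leq 0$ (a function that is strictly increasing through $s^\star$ cannot dip below its value $0$ immediately to the right). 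This contradiction shows no such $s^\star$ exists, so $g \geq 0$ throughout, establishing $e \geq 0$ along $\varphi$ and hence the conclusion.

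\textbf{Main obstacle.} The delicate point is the handedness/topology of the argument: one must be careful that $s^\star$ is genuinely a left-endpoint of the "bad set" where $g = 0$ (rather than, say, $g$ oscillating), that $s^\star$ lies strictly inside $[0,T]$ so the ODE can still evolve past it, and that the augmented domain $Q \land e \geq 0$ is satisfied on the \emph{closed} interval $[0,s^\star]$ — this is exactly why the rule uses the non-strict $e \geq 0$ in the domain and the strict $e' > 0$ in the postcondition. A secondary technical care point is justifying that $s \mapsto \llbracket e \rrbracket(\varphi(s))$ is $C^1$ with derivative given syntactically by $e'$ (the standard differential-equation lemma of \dL relating the differential term $e'$ to the time derivative along solutions); I would cite this from the \dL literature rather than reprove it. Everything else is routine real analysis once the interval $[0,s^\star]$ and the contradiction at $s^\star$ are correctly identified.
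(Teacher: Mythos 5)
Your proof is correct. Note, however, that the paper itself gives no proof of this rule: it is stated in the appendix as a known sound proof rule of \dL (used only to justify the Taylor-model rule of Theorem 3.2), so there is no paper argument to compare against. Your semantic soundness argument is the standard one for closed differential induction: take the first time $s^\star$ at which $e$ reaches $0$ along a $Q$-constrained solution, observe that the prefix up to $s^\star$ is a legitimate run of the ODE under the strengthened domain $Q \land e \geq 0$ (this is where the non-strict $e \geq 0$ in the domain matters), apply the second premise to get a strictly positive Lie derivative at $\varphi(s^\star)$ via the differential-term/time-derivative correspondence, and contradict the nonpositive right difference quotients forced by $g(s^\star)=0$ and $g<0$ immediately to the right. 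The only points needing slight care --- that $s^\star$ is not itself in the bad set (using the first premise when $s^\star=0$), that points of the bad set accumulate at $s^\star$ from the right so the right derivative is controlled, and that differentiability of $s \mapsto e(\varphi(s))$ with derivative given by the syntactic $e'$ is imported from the standard \dL differential lemma --- are all addressed or correctly delegated in your write-up, so the argument stands as a complete soundness proof of the rule.
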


\section{Deferred Proofs}

\begin{lemma}[Discrete time control] \label{lemma:equivalence}
Let \(P\) be an arbitrary formula. Then, the following sequent is
provable in \dL:
\begin{align*}
  0 \leq t < \Delta t \vdash [\sys]P \leftrightarrow [\plant]P.
\end{align*}
\end{lemma}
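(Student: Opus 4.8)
The plan is to show that under the hypothesis $0 \le t < \constSamplingTime$ the controller program $\ctrl$ acts as a no-op, so that $[\sys]P$ and $[\plant]P$ collapse to the same formula. First I would apply the $\axref{boxSeq}$ axiom to rewrite $[\sys]P \equiv [\ctrl;\plant]P \leftrightarrow [\ctrl][\plant]P$. Since $[\plant]P$ is then just \emph{some} formula, it suffices to prove the schematic equivalence $0 \le t < \constSamplingTime \vdash [\ctrl]Q \leftrightarrow Q$ for an arbitrary $Q$, and instantiate $Q \define [\plant]P$ at the end.

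Second, I would unfold the nondeterministic choice in $\ctrl$ with the $\axref{boxChoice}$ axiom, obtaining
\[
  [\ctrl]Q \;\leftrightarrow\; [?t=\constSamplingTime;\, u:=*;\, ?\formulaEnvelope(x,u);\, t:=0]\,Q \;\land\; [?t\neq\constSamplingTime]\,Q .
\]
For the left conjunct, $\axref{boxSeq}$ followed by the $\axref{boxTest}$ axiom peels off the leading test, yielding $t=\constSamplingTime \rightarrow [u:=*;\,?\formulaEnvelope(x,u);\,t:=0]\,Q$; the antecedent $0 \le t < \constSamplingTime$ entails $t\neq\constSamplingTime$ by real arithmetic, so this implication is vacuously valid and the whole left conjunct is provable. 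For the right conjunct, the $\axref{boxTest}$ axiom gives $[?t\neq\constSamplingTime]\,Q \leftrightarrow (t\neq\constSamplingTime \rightarrow Q)$, which, again using $t\neq\constSamplingTime$ from the antecedent, is equivalent to $Q$.

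Third, combining the two conjuncts, $[\ctrl]Q$ is provably equivalent under the hypothesis to a valid conjunct conjoined with $Q$, hence to $Q$ itself. Chaining this with the $\axref{boxSeq}$ step from the beginning (with $Q \define [\plant]P$) establishes $0 \le t < \constSamplingTime \vdash [\sys]P \leftrightarrow [\plant]P$. Because every transformation used ($\axref{boxSeq}$, $\axref{boxChoice}$, $\axref{boxTest}$) is itself a biconditional, both directions of the final $\leftrightarrow$ are obtained simultaneously.

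The only point requiring care — rather than pure bookkeeping — is justifying that the hypothesis $0 \le t < \constSamplingTime$ may be used to discharge the two guard tests: both $?t=\constSamplingTime$ and $?t\neq\constSamplingTime$ inspect the value of $t$ in the \emph{current} state, which is precisely the state in which the antecedent holds, and neither $u:=*$ nor $t:=0$ has had a chance to act before those tests. Once the first guard is shown to fail and the second to succeed, the remainder of either branch is either unreachable or a no-op, so at no point do we have to reason about $[\plant]P$ under a modality that has altered $t$, and the argument stays entirely at the propositional/first-order level layered over the box axioms.
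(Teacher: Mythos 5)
Your proposal is correct and follows essentially the same route as the paper's proof: apply \axref{boxSeq} to split $[\sys]P$ into $[\ctrl][\plant]P$, then show $[\ctrl]$ is a no-op under $0 \le t < \constSamplingTime$ by unfolding with \axref{boxChoice} and \axref{boxTest} and discharging the guards with the arithmetic fact $t \neq \constSamplingTime$. The only presentational difference is that the paper applies the box rewrites in one pass and collapses the resulting propositional formula to $P \leftrightarrow P$ via a single real-arithmetic step, whereas you split the two conjuncts from \axref{boxChoice} and argue each separately; this changes the bookkeeping but not the substance.
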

\begin{proof}
First, we show that for \(0 \leq t < \Delta t\) running the
controller has not affect on the post condition:
\begin{prooftree}
  \AXC{*}
  \LeftLabel{\ruleref{id}, \(\leftrightarrow\)}
  \UIC{\(0 \leq t < \Delta t\vdash P  \leftrightarrow P \)}
  \LeftLabel{\(\R\)}
  \UIC{\(0 \leq t < \Delta t\vdash ((t=\Delta t \rightarrow [u:=*;
      ?\formulaEnvelope(x, u); t:= 0]P) \land (t \neq \Delta t \rightarrow P))
  \leftrightarrow P \)}
  \LeftLabel{\axref{boxTest},\axref{boxChoice}}
  \UIC{\(0 \leq t < \Delta t\vdash [?t = \Delta t; u:=*; ?\formulaEnvelope(x, u);
        t:= 0 \; \cup \; ? t
  \neq \Delta t]P \leftrightarrow P \)}
  \UIC{\(0 \leq t < \Delta t\vdash [\ctrl]P \leftrightarrow P \)}
\end{prooftree}
From the argument above, we can conclude that the whole system is
equivalent to just running the plant for \(0 \leq t < \Delta t\):
\begin{prooftree}
  \AXC{*}
  \UIC{\(0 \leq t < \Delta t \vdash [\ctrl][\plant]P \leftrightarrow
  [\plant]P\)}
  \LeftLabel{\axref{boxSeq}}
  \UIC{\(0 \leq t < \Delta t \vdash [\sys]P \leftrightarrow [\plant]P\)}
\end{prooftree}
\end{proof}
\begin{lemma}[Loop invariant] \label{lemma:loop}
The formula
\begin{align*}
  J &\equiv [\plant]X(x) \land [\plant](t = \Delta t \rightarrow
  \exists u : S(x,u)) \land 0 \leq t \leq \Delta t
\end{align*}
is a loop invariant for the safety property \(S(x,u), t = 0 \vdash
[\sys^*]X(x).\)
This means
\begin{enumerate}
  \item \(S(x,u), t = 0 \vdash J,\)
  \item \(J \vdash [\sys]J,\)
  \item \(J \vdash X(x)\)
\end{enumerate}
are provable is \dL.
\end{lemma}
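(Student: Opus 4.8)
The plan is to verify the three clauses of the loop-invariant rule \ruleref{loop} in turn, taking as available hypotheses the two continuous-dynamics premises fed to \cref{thm:rci}, namely $S(x,u), t = 0 \vdash [\plant]X(x)$ (one-step safety) and $S(x,u), t = 0 \vdash [\plant](t = \Delta t \rightarrow \exists u\, S(x,u))$ (one-step return), since those are exactly what the invocation of this lemma inside \cref{thm:rci} supplies; I also keep the standing assumption $\Delta t > 0$ in scope.

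\emph{Clause 3} ($J \vdash X(x)$) is immediate: from the conjunct $[\plant]X(x)$ of $J$, axiom \axref{boxDX} yields $t \leq \Delta t \rightarrow X(x)$, and the conjunct $0 \leq t \leq \Delta t$ of $J$ discharges the antecedent $t \leq \Delta t$. \emph{Clause 1} ($S(x,u), t = 0 \vdash J$) follows by \ruleref{andRight}: its first two conjuncts are precisely the two hypotheses above, and the third, $0 \leq t \leq \Delta t$, collapses under $t = 0$ to $0 \leq \Delta t$, which is real arithmetic given $\Delta t > 0$.

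\emph{Clause 2} ($J \vdash [\sys]J$) is the heart of the argument. I would first peel off $\sys \equiv \ctrl;\plant$ with \axref{boxSeq} and then unfold $\ctrl$ with \axref{boxChoice}, \axref{boxSeq}, \axref{boxTest} and \axref{boxNonDetAssign}, reducing the goal to
\[
J \;\vdash\; \bigl(t = \Delta t \rightarrow \forall u\,(S(x,u) \rightarrow [t:=0][\plant]J)\bigr)\;\land\;\bigl(t \neq \Delta t \rightarrow [\plant]J\bigr).
\]
After \ruleref{andRight} and \ruleref{impliesRight} split off the two branches, I would push $[\plant]$ through the conjunction of $J$ using \axref{boxAnd}; the crucial observation is that the ODE-idempotence axiom \axref{boxOdeIdm} collapses each resulting nested modality $[\plant][\plant](\cdot)$ back to $[\plant](\cdot)$, so that the invariant's stored continuous-safety facts regenerate after one loop iteration. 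In the $t \neq \Delta t$ branch (where $J$ forces $0 \leq t < \Delta t$; alternatively one may rewrite $[\sys]J \leftrightarrow [\plant]J$ via \cref{lemma:equivalence}), the collapsed obligations $[\plant]X(x)$ and $[\plant](t = \Delta t \rightarrow \exists u\, S(x,u))$ are already conjuncts of $J$, extracted by \ruleref{andLeft}. In the $t = \Delta t$ branch, after \ruleref{forallRight}/\ruleref{impliesRight} introduce a fresh control satisfying $S(x,u)$ and $[t:=0]$ resets the clock, those same two collapsed obligations become exactly the two hypotheses from \cref{thm:rci}. In either branch the remaining obligation $[\plant](0 \leq t \leq \Delta t)$ splits, again via \axref{boxAnd}, into $[\plant](t \leq \Delta t)$, discharged by differential weakening \axref{boxDW} since $t \leq \Delta t$ is the evolution domain, and $[\plant](t \geq 0)$, discharged by closed differential induction \ruleref{dIC} since $t \geq 0$ holds initially and $(t)' = 1 > 0$.

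I expect the main obstacle to be the bookkeeping in Clause 2: correctly threading the controller's two-way test and its nondeterministic reassignment through nested box modalities, and in particular recognizing that \axref{boxOdeIdm} is precisely what makes $J$ genuinely inductive — without collapsing the nested plant modalities, the invariant would fail to reproduce itself. The arithmetic side conditions on $t$, together with keeping $\Delta t > 0$ available, are routine but must not be skipped.
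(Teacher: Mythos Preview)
Your proposal is correct and follows essentially the same route as the paper: the same \axref{boxDX} argument for Clause~3, the same \ruleref{andRight} split for Clause~1, and for Clause~2 the same case split on whether the controller fires, with \axref{boxOdeIdm} collapsing the nested $[\plant][\plant](\cdot)$ modalities so that $J$ regenerates. The paper packages the $t\neq\Delta t$ branch via \cref{lemma:equivalence} (which you mention as an alternative), and you are slightly more explicit than the paper in discharging the residual obligation $[\plant](0\le t\le\Delta t)$ by \axref{boxDW} and \ruleref{dIC}; otherwise the arguments coincide.
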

\begin{proof}
We show that all three sequents are provable in \dL:
\begin{enumerate}
  \item The proof of the initial condition is trivial by the
    defintition of robust control invariant set:
    \begin{prooftree}
      \AXC{\(S(x,u),t=0\vdash [\plant]X(x)\)}
      \AXC{\(S(x,u),t=0\vdash [\plant]S_{\Delta t}(t,x)\)}
      \AXC{*}
      \LeftLabel{\(\R\)}
      \UIC{\(S(x,u),t=0\vdash \leq t \leq \Delta t\)}
      \LeftLabel{\ruleref{andRight}}
      \TIC{\(S(x,u),t=0\vdash J\)}
    \end{prooftree}
  \item We start by showing the discrete invariance property.
    We split the assumption \(0 \leq t \leq \Delta t\) into two
    cases. First, consider the case where the controller gets
    triggered, i.e., \(t = \Delta t\):
    \begin{prooftree}
      \AXC{\ref{item:invariance}\(\equiv t = 0, S(x,u) \vdash
      [\plant]S_{\Delta t}(t,x)\)}
      \LeftLabel{\ruleref{impliesRight},\ruleref{forallRight},\ruleref{weakeningLeft}}
      \UIC{\(t = \Delta t \vdash \forall u : (S(x,u) \rightarrow [t:=
      0][\plant]S_{\Delta t}(t,x))\)}
      \LeftLabel{\axref{boxTest},\axref{boxNonDetAssign},\axref{boxTest}}
      \UIC{\(t = \Delta t \vdash [?t = \Delta t; u:=*; ?S(x, u); t:=
      0][\plant]S_{\Delta t}(t,x)\)}
      \LeftLabel{\(\R\),\ruleref{andRight},\axref{boxChoice}}
      \UIC{\(t = \Delta t \vdash [\ctrl][\plant]S_{\Delta t}(t,x)\)}
      \LeftLabel{\axref{boxOdeIdm}}
      \UIC{\(t = \Delta t \vdash [\ctrl][\plant][\plant]S_{\Delta t}(t,x)\)}
      \LeftLabel{\ruleref{weakeningLeft},\axref{boxSeq}}
      \UIC{\(J, t = \Delta t \vdash [\sys][\plant]S_{\Delta t}(t,x)\)}
    \end{prooftree}
    Next, we consider the case with \(0\leq t < \Delta t\):
    \begin{prooftree}
      \AXC{*}
      \LeftLabel{\ruleref{id},\ruleref{andLeft}}
      \UIC{\(J, 0 \leq t < \Delta t \vdash [\plant]S_{\Delta t}(t,x)\)}
      \LeftLabel{Lemma \ref{lemma:equivalence}}
      \UIC{\(J, 0 \leq t < \Delta t \vdash [\ctrl][\plant]S_{\Delta t}(t,x)\)}
      \LeftLabel{\axref{boxOdeIdm}}
      \UIC{\(J, 0 \leq t < \Delta t \vdash
      [\ctrl][\plant][\plant]S_{\Delta t}(t,x)\)}
      \LeftLabel{\axref{boxSeq}}
      \UIC{\(J, 0 \leq t < \Delta t \vdash [\sys][\plant]S_{\Delta t}(t,x)\)}
    \end{prooftree}

    Next, we prove the safety property in a similar manner starting
    with the \(t=\Delta t\) case:
    \begin{prooftree}
      \AXC{\ref{item:safety}\(\equiv t = 0, S(x,u) \vdash [\plant]X(x)\)}
      \LeftLabel{\ruleref{impliesRight},\ruleref{forallRight},\ruleref{weakeningLeft}}
      \UIC{\(t = \Delta t \vdash \forall u : (S(x,u) \rightarrow [t:=
      0][\plant]X(x))\)}
      \LeftLabel{\axref{boxTest},\axref{boxNonDetAssign},\axref{boxTest}}
      \UIC{\(t = \Delta t \vdash [?t = \Delta t; u:=*; ?S(x, u); t:=
      0][\plant]X(x)\)}
      \LeftLabel{\(\R\),\ruleref{andRight},\axref{boxChoice}}
      \UIC{\(t = \Delta t \vdash [\ctrl][\plant]X(x)\)}
      \LeftLabel{\ruleref{weakeningLeft},\axref{boxOdeIdm}}
      \UIC{\(J, t = \Delta t \vdash [\ctrl][\plant][\plant]X(x)\)}
      \LeftLabel{\axref{boxSeq}}
      \UIC{\(J, t = \Delta t \vdash [\sys][\plant]X(x)\)}
    \end{prooftree}
    Finally, the other case follow from:
    \begin{prooftree}
      \AXC{*}
      \LeftLabel{\ruleref{id},\ruleref{andLeft}}
      \UIC{\(J, 0 \leq t < \Delta t \vdash [\plant]X(x)\)}
      \LeftLabel{Lemma \ref{lemma:equivalence}}
      \UIC{\(J, 0 \leq t < \Delta t \vdash [\ctrl][\plant]X(x)\)}
      \LeftLabel{\axref{boxOdeIdm}}
      \UIC{\(J, 0 \leq t < \Delta t \vdash [\ctrl][\plant][\plant]X(x)\)}
      \LeftLabel{\axref{boxSeq}}
      \UIC{\(J, 0 \leq t < \Delta t \vdash [\sys][\plant]X(x)\)}
    \end{prooftree}
  \item The safety property follows from the \axref{boxDX} axiom:
    \begin{prooftree}
      \AXC{*}
      \LeftLabel{\ruleref{id}}
      \UIC{\(0 \leq t \leq \Delta t,[\plant]X(x), t \leq \Delta t
      \rightarrow X(x) \vdash X(x)\)}
      \LeftLabel{\ruleref{id},\axref{boxDX},\ruleref{cut}}
      \UIC{\(0 \leq t \leq T, [\plant]X(x) \vdash X(x)\)}
    \end{prooftree}
\end{enumerate}
\end{proof}
\end{document}